\title{Contiguous Allocation of Indivisible Items on a Path}
\author[1]{Yasushi Kawase}
\author[2]{Bodhayan Roy}
\author[2]{Mohammad Azharuddin Sanpui}
\affil[1]{The University of Tokyo}
\affil[2]{Indian Institute of Technology Kharagpur}
\date{}
\newtheorem{theorem}{Theorem}
\newtheorem{proposition}{Proposition}
\newtheorem{example}{Example}
\newtheorem{corollary}{Corollary}
\newcolumntype{d}{>{\columncolor{gray!30}}c}
\setlist[itemize]{leftmargin=10pt}
\newcommand{\alert}[1]{\textcolor{red}{#1}}
\newcommand{\true}{\texttt{true}}
\newcommand{\false}{\texttt{false}}
\newcommand{\MMS}{\mathrm{MMS}}
\newcommand{\bA}{\mathbf{A}}
\renewcommand{\mid}{:}
\DeclareMathOperator*{\argmax}{arg\,max}
\newcommand{\ot}{\leftarrow}
\definecolor{myyellow}{cmyk}{0,0.02,0.23,0.01}
\newcommand*\divider[1]{\tikz[baseline=(T.base)]{\node[inner xsep=0pt, inner ysep=2pt, outer sep=0pt, fill=gray!30, overlay] {\phantom{$00$}};\node[inner xsep=0pt, inner ysep=1.5pt, outer sep=0pt] (T) {#1};}}
\begin{document}

\maketitle

\begin{abstract}
We study the problem of allocating indivisible items on a path among agents. 
The objective is to find a fair and efficient allocation in which each agent's bundle forms a contiguous block on the line. We say that an instance is \emph{$(a, b)$-sparse} if each agent values at most $a$ items positively and each item is valued positively by at most $b$ agents.
We demonstrate that, even when the valuations are binary additive, deciding whether every item can be allocated to an agent who wants it is NP-complete for the $(4,3)$-sparse instances.
Consequently, we provide two fixed-parameter tractable (FPT) algorithms for maximizing utilitarian social welfare, with respect to the number of agents and the number of items.
Additionally, we present a $2$-approximation algorithm for the special case when the valuations are binary additive, and the maximum utility is equal to the number of items. 
Also, we provide a $1/a$-approximation algorithm for the $(a,b)$-sparse instances.
Furthermore, we establish that deciding whether the maximum egalitarian social welfare is at least $2$ or at most $1$ is NP-complete for the $(6,3)$-sparse instances, even when the valuations are binary additive. We present a $1/a$-approximation algorithm for maximizing egalitarian social welfare for the $(a,b)$-sparse instances. Besides, we give two FPT algorithms for maximizing egalitarian social welfare in terms of the number of agents and the number of items.
We also explore the case where the order of the blocks of items allocated to the agents is predetermined.
In this case, we show that both maximum utilitarian social welfare and egalitarian social welfare can be computed in polynomial time.
However, we determine that checking the existence of an EF1 allocation is NP-complete, even when the valuations are binary additive.
\end{abstract}

\section{Introduction} \label{sec:intro}
Imagine a scenario in which multiple organizers wish to use the same conference center for their events. Each organizer has a preferred schedule for their events. Typically, organizers prefer to schedule their events in contiguous blocks of time rather than splitting them into separate periods. This leads to the following question: How should the conference center committee schedule time in a contiguous block of time for the different organizers?


A fundamental task in such allocation task is to achieve both fairness and efficiency.
Fair division is one of the most fundamental and well-studied topics in 
computational social choice theory~\cite{brandt2016introduction,chevaleyre2007short,aziz2016computational} and has received significant attention in the domains of mathematics, economics, political science, and computer science~\cite{Cake,Fairdivisionandcollectivewelfare,thomson2007children,brams1996fair,klamler2010fair}. 
Fair division problems are of particular interest because of their various real-world applications, such as students sharing the cost of renting an apartment, spouses sharing assets after divorce, and nations claiming ownership of disputed territories.
Research discussions on fair division often explore between two distinct categories of items. Certain items, such as cake and land, are considered divisible due to their ability to be divided among agents in an arbitrary manner~\cite{kurokawa2013cut,branzei2017query,aziz2014cake,edmonds2011cake,caragiannis2011towards,aziz2016discrete,aziz2020bounded}. Additional items, such as residences and automobiles, possess indivisible characteristics, necessitating their allocation in their whole to a single agent~\cite{segal2019democratic,bezakova2005allocating,bouveret2016fair,lipton2004approximately,lang2016fair,kawase2020max,aziz2023possible,halpern2020fair,Kawase2022RandomAO}.
This paper deals with indivisible items. For example, in the scheduling scenario, we consider the case where time slots (e.g., 10-minute increments) are provided in advance.

A natural criterion for assessing the quality of an allocation is \emph{utilitarian social welfare}, which is defined as the sum of the utilities among all agents. 
Another criterion is \emph{egalitarian social welfare}, which is defined as the minimum of the utilities of all agents.
One of the most prominent fairness notions is \emph{envy-freeness (EF)}, which means that no agent envies another based on the sets of items that they receive. 
Since EF is a strong fairness guarantee, there are also its relaxations to consider.
One standard such relaxation is \emph{envy-free up to one item (EF1)}, which requires that any envy that one agent has toward another can be eliminated by removing one item from the envied agent's bundle.
Other fairness criteria include \emph{maximin share guarantee (MMS)}, \emph{proportionality (PROP)}, and \emph{equitability (EQ)}. In a PROP allocation, each agent is guaranteed to receive at least a $1/n$ fraction of the value of the entire set of items, where $n$ is the total number of agents. In an EQ allocation, the values that correspond to the allocation of each agent must be equal.
The formal definition of these criteria will be provided in Section~\ref{sec:prelim}.

This paper explores the division of items that are arranged on a path while imposing the restriction that only contiguous subsets of items can be assigned to the agents. Our primary focus is on scenarios where each agent employs an additive valuation function, as this represents the most fundamental and crucial setting. We investigate the computational complexities of finding a contiguous allocation that meets a specified fairness and efficiency criterion.
Furthermore, we also examine a constraint where the blocks are assigned to agents in a specific order of agents.
In the scheduling scenario, this constraint means that the ordering of events is predetermined.

\subsection{Our results} \label{sec:results}
We investigate the computational complexities of allocating indivisible items on a path while ensuring contiguity.
We explore two settings: one in which the allocation must be consistent with a specified order of agents (fixed-order) and another in which it does not need to (flexible-order).


In Section~\ref{sec:fixord}, we examine the fixed-order setting.
We provide a polynomial time algorithm for maximizing utilitarian social welfare, based on dynamic programming (Section~\ref{sec:dp}).
We give a polynomial time greedy algorithm for maximizing egalitarian welfare (Section~\ref{sec:greedy}). 
Additionally, we present polynomial time algorithms for computing allocations that satisfy MMS, PROP, and EQ, respectively.
However, in Section~\ref{sec:NPEF1}, we demonstrate that deciding the existence of an EF1 allocation is NP-hard, even when the valuations are binary additive. 

In Section~\ref{sec:flexord}, we consider the flexible-order setting.
We prove that both maximizing utilitarian social welfare and maximizing egalitarian social welfare are NP-hard in sparse instances of the flexible-order setting, in contrast to the fixed-order setting.
This hardness holds even when the valuations are binary additive, and the question is to determine whether the optimal utilitarian social welfare is equal to the number of items.
Moreover, it is NP-complete even when the valuations are binary additive, and the question is to decide whether the maximum egalitarian social welfare is at least $2$ or at most $1$.
Consequently, we provide a $2$-approximation algorithm for maximizing utilitarian social welfare, when the valuations are binary additive and the optimum utilitarian social welfare is the number of items.
We also present two $a$-approximation algorithms for the $(a,b)$-sparse instances for maximizing utilitarian and egalitarian social welfare, respectively.
Furthermore, we present two FPT algorithms for maximizing utilitarian social welfare in terms of the number of agents and the number of items. We also provide two FPT algorithms for maximizing egalitarian social welfare in terms of   the number of agents and the number of items. 
For envy-freeness, it is known that an EF1 allocation always exists in the flexible-order setting~\cite{igarashi2023cut}.
However, the complexity of constructing it remains an open question.


Our results are summarized in Table~\ref{tbl:summary}.
For the sake of comparison, we also provide the results for the case without contiguity constraint.

\begin{table}[htb]
\caption{The computational complexities of checking the existence of an allocation of items that satisfies a designated property and constructing one, if it exists. All the hardness results hold even when the valuations are restricted to be binary additive.}\label{tbl:summary}
\centering
\begin{tabular}{c|lll}
\toprule
       & \multicolumn{2}{c}{contiguous} & unconstrained\\
       & flexible-order      & fixed-order& \\\midrule
EF               & NP-h~\cite{goldberg2020contiguous} & NP-h~\cite{goldberg2020contiguous}       & NP-h~\cite{aziz2015fair}  \\
EF1              & open      & \textbf{NP-h {\small(Theorem~\ref{thm:fixed-EF1-hard})}}                   & P~\cite{lipton2004approximately,caragiannis2019unreasonable} \\
U-max            & \textbf{NP-h~{\small(Theorem~\ref{thm:Umax-hard})}}         & \textbf{P {\small(Theorem~\ref{thm:maxsum})}} &P${}^\dagger$            \\
E-max            & \textbf{NP-h~{\small(Theorem~\ref{thm:Emax-hard})}}         & \textbf{P {\small(Theorem~\ref{thm:Emax})}} &P${}^\ddagger$              \\
PO               & P~\cite{igarashi2019pareto}  & \textbf{P~{\small(Theorem~\ref{thm:maxsum})}} &P${}^\dagger$              \\
MMS              & P~\cite{bouveret2017fair}       & \textbf{P~{\small(Theorem~\ref{thm:mms})}} & P~\cite{bouveret2016characterizing} \\
PROP             & NP-h~\cite{goldberg2020contiguous} & \textbf{P {\small(Theorem~\ref{thm:prop})}}              & P${}^\ddagger$\\
EQ               & NP-h~\cite{goldberg2020contiguous} & \textbf{P {\small(Theorem~\ref{thm:equit})}}& P${}^\ddagger$\\
\bottomrule
\end{tabular}
\begin{flushleft}
{\scriptsize
${}^\dagger$ These can be solved by just allocating each item to an arbitrary agent who values it.\\
${}^\ddagger$ These can be solved by a max-flow algorithm (see, e.g., \cite{Schrijver2003}).\\
}
\end{flushleft}
\end{table}

\subsection{Related work} \label{sec:related}
The contiguity requirement has been studied in connection to fairness notions in the context of allocating divisible items, often signified by a cake. In particular, Even and Paz~\cite{even1984note} showed the existence of contiguous proportional allocation using the divide and conquer rule. Dubins and Spanier~\cite{dubins1961cut} presented a moving-knife technique that ensures a contiguous proportional allocation. Stromquist \cite{stromquist1980cut} stated a moving-knife algorithm that provides a guarantee of a contiguous envy-free allocation for a group of three players and also confirmed the existence of a contiguous envy-free allocation, but unachievable using a finite algorithm \cite{stromquist2008envy}. 
Su~\cite{edward1999rental} used approaches that included Sperner's lemma in order to show the existence of a contiguous envy-free allocation. Dobo{\v{s}} et al.~\cite{dobovs2013existence} explored the concept of contiguous equitable allocations, focusing on their existence and computation. Notably, they demonstrated that the presence of such an allocation is certain, even when the ordering of the agents is predetermined.
Arunachaleswaran et al.~\cite{arunachaleswaran2019fair} gave an algorithm that efficiently computes a contiguous cake division with envy that is multiplicatively restricted. Specifically, the envy noticed by each agent is limited by a factor of $3$. Barman and Kulkarni~\cite{barman2022approximation} designed a computationally efficient algorithm that yields a contiguous cake division with both additive and multiplicative restrictions on envy. An algorithm proposed by Deng et al.~\cite{deng2012algorithmic} provides an additive approximation of an envy-free connected piece cake division. However, this process requires an exponential amount of time in terms of the number of agents. Goldberg et al.~\cite{goldberg2020contiguous} provided an efficient algorithm for the computation of contiguous allocations, ensuring that the envy between any two agents does not exceed one-third. 

In the case of indivisible items, the contiguity requirement has also been taken into account. Marenco and Tetzlaff~\cite{marenco2014envy} showed that under the condition that items are placed on a line and each item is valued positively by no more than one agent, the existence of a contiguous envy-free allocation is guaranteed. Barrera et al.~\cite{barrera2015discrete}, Bil{\`o} et al.~\cite{bilo2022almost}, and Suksompong~\cite{suksompong2019fairly} proved that different relaxations of envy-freeness can be satisfied when each item has an incentive to provide positive value for any number of agents. Bouveret et al.~\cite{bouveret2017fair} showed that the problem of detecting the existence of a contiguous fair allocation is NP-hard while considering either proportionality or envy-freeness as fairness criteria. Goldberg et al.~\cite{goldberg2020contiguous} showed that it is NP-hard to decide if an instance with indivisible items on a line admits a contiguous allocation satisfying all properties in $X$, even if all agents have binary valuations and value the same number of items, where $\emptyset\ne X\subseteq\{\textit{EF, PROP, EQ}\}$. 
Contiguity has been the subject of research in the broader context of indivisible items placed on a graph of arbitrary structure \cite{igarashi2019pareto,Bei2019ConnectedFA,igarashi2023cut}. The notion of graph fair division has received significant attention in subsequent years~\cite{aziz2018knowledge,barany2015block,truszczynski2020maximin,greco2020complexity,igarashi2023cut,chevaleyre2017distributed,Amanatidis2022FairDO,aziz2022algorithmic,misra2022fair}.

In recent times, there has been an increasing focus on the issue of social welfare in the context of cake-cutting. The studies of this specific topic started by Caragiannis et al.~\cite{caragiannis2012efficiency}, with the objective of quantifying the demise in social welfare that may potentially arise from various fairness criteria. 
Aumann et al.~\cite{aumann2012computing} investigated the problem of finding a contiguous allocation that maximizes utilitarian social welfare for both divisible and indivisible items. 
They proved that finding optimal contiguous allocation is NP-hard, using a reduction from the $3$-dimensional matching problem, even when the valuations are piecewise-uniform. They also provide a $1/8$-approximation algorithm when valuations are additive. Bei et al.~\cite{bei2012optimal} and Cohler et al.~\cite{cohler2011optimal} considered maximizing utilitarian social welfare but introduced additional fairness constraints of proportionality and envy-freeness, respectively. Bei et al.~\cite{bei2012optimal} also developed approximation results for maximizing utilitarian social welfare with proportionality as a constraint. In the context of indivisible items on a path, Misra et al.~\cite{misra2021equitable} showed that maximizing utilitarian social welfare is NP-hard even on binary $(4,4)$-sparse instances. We will prove the NP-hardness for the class of binary-additive on $(4,3)$-sparse instances using a reduction from 2L-OCC-3SAT. Igarashi and Peters~\cite{igarashi2019pareto} provided a polynomial-time algorithm that finds a Pareto optimal contiguous allocation when valuations are additive.
   

\section{Preliminaries} \label{sec:prelim}
We study the problem of allocating indivisible items where the items are arranged on a line, and each agent has an additive valuation for each item.
We call this setting \emph{contiguous allocation of items on a path (CAP)}.
For a positive integer $k$, we denote the set $\{1,2,\dots,k\}$ by $[k]$.
Let ${M}=\{g_1,g_2,\dots,g_m\}$ denote the set of $m$ indivisible items, and ${N}=[n]$ be the set of agents. 
Assume that items are aligned on a path in the order of indices.
Each agent $i\in {N}$ has an additive valuation $v_i\colon 2^M\to\mathbb{Z}_+$ where $v_i(X)=\sum_{g\in X} v_i(\{g\})$ for each $X \subseteq {M}$.
We write $v_i(g)$ to denote $v_i(\{g\})$ for short.
The valuation $v_i$ is called \emph{binary} if $v_i(g)\in\{0,1\}$ for each $g\in M$.
For simplicity, we assume that each item $g\in {M}$ is valued, i.e., there exists an agent $i\in {N}$ such that $v_i(g)\ge 1$. 
An instance of CAP is $(N,M,(v_i)_{i\in N})$.
We call an instance $(N,M,(v_i)_{i\in N})$ of CAP \emph{binary} if $v_i$ is binary for every $i\in N$.

An allocation $\bA=(A_1,A_2,\dots,A_n)$ is a partition of all items into bundles for the agents, i.e., $\bigcup_{i\in N}A_i=M$ and $A_i\cap A_j=\emptyset$ for any distinct $i,j\in N$.
In the allocation $\bA$, agent $i\in N$ receives bundle $A_i$. 
We call an allocation $\bA$ \emph{contiguous} if each bundle $A_i$ forms a contiguous block of items on the line, i.e., $A_i=\{g_k,g_{k+1},\dots,g_{\ell}\}$ for some $k$ and $\ell$.
Moreover, a contiguous allocation $\bA$ is called \emph{order-consistent} if the blocks are assigned to agents in a specific order, i.e., there exist indices $1=k_1\le k_2\le\dots\le k_n\le k_{n+1}=m+1$ such that $A_i=\{g_{k_i},g_{k_i+1},\dots,g_{k_{i+1}-1}\}$ for each $i\in N$. 
We consider two settings: the fixed-order setting and the flexible-order setting.
In the fixed-order setting, we only allow contiguous allocations that are order-consistent.
In the flexible-order setting, we allow all the contiguous allocations.

An allocation $\bA$ is called \emph{envy-free} if, for all $i,j\in N$, it holds that $v_i(A_i)\ge v_i(A_j)$.
In addition, an allocation $\bA$ is called \emph{envy-free up to one item (EF1)} if, for all $i,j\in N$, it holds that $v_i(A_i)\ge v_i(A_j\setminus X)$ for some $X\subseteq A_j$ with $|X|\le 1$.
The \emph{utilitarian social welfare} and the \emph{egalitarian social welfare} of an allocation $\bA$ are defined as $\sum_{i\in N}v_i(A_i)$ and $\min_{i\in N}v_i(A_i)$, respectively.
We call an allocation $\bA$ is \emph{U-max} and \emph{E-max} if it maximizes the utilitarian social welfare and the egalitarian social welfare, respectively.
An allocation $\bA$ is called \emph{Pareto-optimal (PO)} if, for any other allocation $\bA'$, we have $v_i(A_i)=v_i(A_i')$ for all $i\in N$ or $v_i(A_i)>v_i(A_i')$ for some $i\in N$.
Clearly, any U-max allocation is PO.
The maximin share guarantee of an agent $i$ is defined as $\MMS(i)=\max_{\bA\in \mathcal{A}} {\min_{j\in N}}v_i(A_j)$, where $\mathcal{A}$ is the set of all possible contiguous allocations.
An allocation $\bA$ is said to be \emph{maximin share (MMS)} if $v_i(A_i)\geq \MMS(i)$ for every $i\in N$. 
Moreover, an allocation $\bA$ is said to be \emph{proportional (PROP)} and \emph{equitable (EQ)} if $v_i(A_i)\ge v_i(M)/n~(\forall i\in N)$ and $v_i(A_i)=v_j(A_j)~(\forall i,j\in N)$, respectively.

We say that an instance is \emph{$(a, b)$-sparse} if each agent values at most $a$ items positively and each item is valued positively by at most $b$ agents. When the valuations are binary, then the $(a,b)$-sparse instance is called binary $(a,b)$-sparse instance.
\section{Fixed Order Setting} \label{sec:fixord}
In this section, we explore the setting where the allocation must be order-consistent.
To clarify our setting, let's begin by observing a specific example.
\begin{example}\label{ex:fixed-order}
Suppose that there are two agents $N=\{1,2\}$ and four items $M=\{g_1,g_2,g_3,g_4\}$.
The agents' valuations are given as 
$v_1(g_1)=v_1(g_2)=v_1(g_3)=v_1(g_4)=v_2(g_1)=v_2(g_2)=1$ and 
$v_2(g_3)=v_2(g_4)=0$ (see Figure~\ref{fig:ex-fixed-order}).
Note that there are five possible order-consistent contiguous allocations.

No order-consistent contiguous allocation is EF1 (and also EF) because $v_1(A_1)+2\le v_1(A_2)$ if $A_1\subseteq\{g_1\}$ and $v_2(A_2)+2\le v_2(A_1)$ if $A_1\supsetneq\{g_1\}$.
The allocation that agent $1$ receives all the items is U-max, and the resulting utilitarian social welfare is $4$.
The allocation $(A_1,A_2)=(\{g_1\},\{g_2,g_3,g_4\})$ is E-max and EQ, where $v_1(A_1)=v_2(A_2)=1$.
In addition, there are no MMS allocations or PROP allocations since $\MMS(1)=v_1(M)/2=2$ and $\MMS(2)=v_2(M)/2=1$.
\begin{figure}[htb]
\centering
\begin{tikzpicture}[scale=1.3]
  \draw (1,0) -- (4,0);
  \foreach \i/\s/\t in {1/1/1, 2/1/1, 3/1/0, 4/1/0} 
  {
    \draw[fill=white] (\i,0) circle (1.8mm) node {$g_\i$};
    \node at (\i,-.35) {$\s$};
    \node at (\i,-.6) {$\t$};
  }
  \node at (0.2,-.36) {$v_1$:};
  \node at (0.2,-.61) {$v_2$:};
\end{tikzpicture}
\caption{The valuation of the agents in Example~\ref{ex:fixed-order}} \label{fig:ex-fixed-order}
\end{figure}
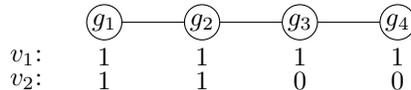
\end{example}

In what follows, we examine the problem of checking the existence of an allocation that satisfies a designated property and constructing one if it exists. 
We provide polynomial-time algorithms for the following properties: U-max, E-max, EQ, PROP, MMS.
Moreover, we demonstrate that the problem of verifying the existence of an EF1 allocation is NP-hard.

\subsection{Dynamic programming for U-max and EQ} \label{sec:dp}
We first design an algorithm for the U-max problem based on dynamic programming.
Our algorithm utilizes a table, denoted as $T$, with rows $[n]$ and columns $\{0,1,\dots,m\}$. Here, $n$ represents the number of agents, and $m$ represents the number of items. 
Each cell $T_{i,j}$ in the table represents the maximum utilitarian social welfare when allocating the first $j$ items (i.e., $\{g_1,g_2,\dots,g_j\}$) to the first $i$ agents (i.e., $\{1,2,\dots,i\}$). Note that some of these agents may not receive any item. We will show how to compute the allocation for each cell of the table in polynomial time.

For every $j\in\{0,1,\dots,m\}$, the entry $T_{1,j}$ is $v_1(\{g_1,\dots,g_j\})$. Hence, $T_{1,j}$ can be computed in a constant time per entry.
In addition, the entry $T_{i,0}$ is $0$ for every $i\in\{1,2,\dots,n\}$.
For each $i\in\{2,3,\dots,n\}$ and $j\in\{1,2,\dots,m\}$, the entry $T_{i,j}$ can be determined as
\begin{align*}
T_{i,j}=\max\{T_{i,j-1}+v_i(g_j),\,T_{i-1,j}\}
\end{align*}
by considering two cases of whether item $g_j$ is allocated to agent $i$ or not.
Thus, each entry can be computed in a constant time by filling them in an appropriate order.
Hence, the objective value $T_{n,m}$ can be computed in $O(mn)$ time.
Moreover, an allocation that attains the value can be constructed in linear time by backtracking the table. Our algorithm is formally described as Algorithm~\ref{alg:U-max}. 

\begin{algorithm}[H]
    \caption{Dynamic Programming for U-max}\label{alg:U-max}
    \tcc{Construct a table $T$}
    \lFor{$i\gets1$ \KwTo $n$}{$T_{i,0}\ot 0$}
    \lFor{$j\gets1$ \KwTo $m$}{$T_{1,j}\ot T_{i,j-1}+v_1(g_j)$}
    \For{$i\gets2$  \KwTo $n$} {
        \For{$j\gets1$ \KwTo   $m$}{
            $T_{i,j}=\max\{T_{i,{j-1}}+v_i(g_j),\,T_{{i-1},j}\}$\;
        }
    }
    \tcc{Output an allocation $\bA^*$ with the U-max value $T_{m,n}$}
    Let $\bA^*\ot (\emptyset,\emptyset,\dots,\emptyset)$, $i\ot n$, $j\ot m$, and $u^*\ot T_{n,m}$\;
    \While{$i\ge 2$ and $j\ge 1$}{
      \If{$T_{i,j-1}+v_i(g_j)=u^*$}{
        Update $A^*_i\ot A^*_i\cup\{g_j\}$, $u^*\ot u^*-v_i(g_j)$, and $j\ot j-1$\;
      }
      \Else{
        $i\ot i-1$\;
      }
    }
    \lIf{$i=1$}{$A^*_1\ot \{g_1,\dots,g_j\}$}
    \Return the allocation $\bA^*$\;
\end{algorithm}

Therefore, we obtain the following theorem.
\begin{theorem}\label{thm:maxsum}
For the fixed-order setting, the U-max problem of CAP can be solved in $O(mn)$ time.
\end{theorem}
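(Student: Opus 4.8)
The plan is to establish correctness of the dynamic programming recurrence and then bound its running time. The central claim is that each table entry $T_{i,j}$ equals the maximum utilitarian social welfare obtainable when the first $j$ items $\{g_1,\dots,g_j\}$ are distributed, in an order-consistent contiguous manner, among the first $i$ agents $\{1,\dots,i\}$ (with later agents allowed to receive nothing). First I would verify the base cases: for $i=1$, the only order-consistent allocation of $\{g_1,\dots,g_j\}$ to agent $1$ alone gives agent $1$ all $j$ items, so $T_{1,j}=v_1(\{g_1,\dots,g_j\})$; and for $j=0$ there are no items, so $T_{i,0}=0$.

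For the inductive step, fix $i\ge 2$ and $j\ge 1$ and consider any order-consistent contiguous allocation of $\{g_1,\dots,g_j\}$ to $\{1,\dots,i\}$. In such an allocation the bundle of agent $i$ is a suffix $\{g_{k},\dots,g_j\}$ for some $k$ (possibly empty). Either agent $i$ receives item $g_j$ or not. If not, then agent $i$ receives nothing among the first $j$ items (since its bundle is a suffix), so the allocation restricts to an order-consistent allocation of $\{g_1,\dots,g_j\}$ to $\{1,\dots,i-1\}$ with the same welfare, bounded by $T_{i-1,j}$. If agent $i$ does receive $g_j$, then removing $g_j$ yields an order-consistent allocation of $\{g_1,\dots,g_{j-1}\}$ to $\{1,\dots,i\}$ whose welfare is the original welfare minus $v_i(g_j)$, bounded by $T_{i,j-1}$. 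Hence the optimum is at most $\max\{T_{i,j-1}+v_i(g_j),\,T_{i-1,j}\}$. Conversely, an optimal allocation achieving $T_{i-1,j}$ (extended by giving agent $i$ nothing) and an optimal allocation achieving $T_{i,j-1}$ (extended by appending $g_j$ to agent $i$'s suffix, which keeps it contiguous and order-consistent) show both quantities are attainable, so equality holds. Thus $T_{n,m}$ is the U-max value.

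Next I would argue the complexity: there are $O(mn)$ entries, each computed by a constant-size arithmetic comparison once the entries are filled in the natural order (row by row, or column by column), giving $O(mn)$ time for the table, assuming arithmetic on the accumulated values takes constant time. For the construction, I would show the backtracking loop in Algorithm~\ref{alg:U-max} recovers a witnessing allocation: maintaining a residual target $u^*$, at cell $(i,j)$ the test $T_{i,j-1}+v_i(g_j)=u^*$ decides whether $g_j$ belongs to agent $i$'s bundle in some optimal completion (decrementing $j$ and $u^*$ accordingly) or whether agent $i$ receives no further item (decrementing $i$); when $i$ reaches $1$, the remaining prefix $\{g_1,\dots,g_j\}$ goes to agent $1$. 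This runs in $O(m+n)$ additional time. I expect the main obstacle to be stating the invariant for the backtracking phase cleanly — precisely, that at each iteration $(i,j,u^*)$ there exists an order-consistent contiguous allocation of $\{g_1,\dots,g_j\}$ to $\{1,\dots,i\}$ of welfare exactly $u^*$ that is consistent with the bundles already committed to agents $i+1,\dots,n$ — and checking this invariant is preserved in both branches. Combining the correctness of the recurrence, the invariant for backtracking, and the time bounds yields the theorem.
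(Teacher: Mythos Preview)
Your proposal is correct and follows essentially the same approach as the paper: the same dynamic-programming table, the same recurrence $T_{i,j}=\max\{T_{i,j-1}+v_i(g_j),\,T_{i-1,j}\}$, and the same backtracking. You simply supply more detail than the paper does---in particular, the paper justifies the recurrence with a single sentence (``by considering two cases of whether item $g_j$ is allocated to agent $i$ or not''), whereas you spell out both inequalities and make explicit the key observation that in an order-consistent allocation agent $i$'s bundle is a suffix, so either it contains $g_j$ or it is empty.
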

Note that even when the valuation functions are not necessarily additive, the U-max problem can be solved in $O(m^2n)$ time by updating the table as $T_{i,j}=\max_{\ell=1}^{j+1}(T_{i-1,\ell-1}+v_i(\{g_\ell,g_{\ell+1},\dots,g_{j}\}))$.
Moreover, the algorithm implies an FPT algorithm with respect to the number of agents for the flexible-order setting.
\begin{corollary}\label{cor:Umax}
For the flexible-order setting, 
the U-max problem of CAP is FPT with respect to the number of agents.
\end{corollary}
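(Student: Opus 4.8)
The plan is to reduce the flexible-order U-max problem to $n!$ instances of the fixed-order U-max problem, one per ordering of the agents along the path, and then invoke Theorem~\ref{thm:maxsum}. The key observation is that any contiguous allocation $\bA=(A_1,\dots,A_n)$ induces an ordering of the agents: reading the path from left to right, the nonempty bundles appear in some sequence $A_{i_1},A_{i_2},\dots$, and the agents that receive empty bundles can be inserted anywhere into this sequence. Hence there is a permutation $\pi$ of $N$ such that, after relabeling agent $i$ as $\pi(i)$, the allocation $\bA$ becomes order-consistent with respect to $\pi$. In other words, every contiguous allocation is order-consistent for at least one of the $n!$ agent orderings.

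The algorithm then enumerates all $n!$ permutations $\pi$ of $N$; for each $\pi$ it builds the relabeled instance with valuations $v'_i = v_{\pi(i)}$, runs Algorithm~\ref{alg:U-max} on it (which by Theorem~\ref{thm:maxsum} takes $O(mn)$ time and returns a U-max order-consistent allocation for that relabeled instance), and keeps the best allocation found, undoing the relabeling at the end. For correctness, note that any order-consistent allocation of a relabeled instance corresponds to a genuine contiguous allocation of the original instance with the same utilitarian welfare, so the algorithm never reports too large a value; conversely, an optimal flexible-order allocation is order-consistent for some $\pi$, and for that $\pi$ the call to Algorithm~\ref{alg:U-max} returns an allocation of at least that welfare. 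The total running time is $O(n!\cdot mn)$, which has the form $f(n)\cdot\mathrm{poly}(m,n)$, so the problem is FPT in $n$.

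There is essentially no deep obstacle here; the only point requiring a little care is the treatment of empty bundles when defining the induced ordering (the ordering among agents receiving nothing is not unique). This causes no trouble because the fixed-order dynamic program already allows agents to receive empty bundles, so enumerating a possibly redundant set of $n!$ orderings is harmless — the substantive content is just the observation that these $n!$ orderings suffice to capture every contiguous allocation.
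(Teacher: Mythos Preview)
Your proposal is correct and follows exactly the same approach as the paper: enumerate all $n!$ agent orderings, run the fixed-order dynamic program of Theorem~\ref{thm:maxsum} for each, and return the best; this gives running time $n!\cdot O(mn)$, which is FPT in $n$. Your treatment is actually more detailed than the paper's (you explicitly justify why every contiguous allocation is order-consistent for some permutation and address the empty-bundle subtlety), but the substance is identical.
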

\begin{proof}
We may apply the dynamic programming algorithm for all possible orderings of agents. The total number of such orderings is $n!~(\le n^n)$. 
Therefore, for the flexible-order setting, the U-max problem can be solved in $n!\cdot O(mn)=O(n^{n+1}\cdot m)$ time, which is FPT with respect to the number of agents.
\end{proof}
It should be noted that \citet{aumann2012computing} provided a faster FPT algorithm for this setting by directly applying dynamic programming. Their algorithm is based on a technique similar to Held--Karp algorithm for the traveling salesman problem.

Next, we provide an algorithm for the EQ problem. 
To do this, we solve the problem of deciding whether there exists an order-consistent contiguous allocation $\bA$ such that $v_i(A_i)=\alpha~(\forall i\in N)$ for each 
\begin{align*}
\alpha\in\big\{v_1(\emptyset),v_1(\{g_1\}),v_1(\{g_1,g_2\}),\dots,v_1(\{g_1,g_2,\dots,g_m\})\big\}.
\end{align*}
Hereafter, we solve the decision problem for a fixed $\alpha$.
To solve the problem, we construct a table $P$ with $n$ rows and $m$ columns.
For each $i\in [n]$ and $j\in[m]$,
each cell $P_{i,j}$ represents the existence of an order-consistent contiguous allocation of items $\{g_1,\dots,g_j\}$ to agents $\{1,2,\dots,i\}$ such that each agent values received block as $\alpha$.
Then, $P_{0,0}$ is $\true$, 
$P_{i,0}$ is $\true$ for every $i\in\{1,\dots,n\}$ if $\alpha=0$,
$P_{i,0}$ is $\false$ for every $i\in\{1,\dots,n\}$ if $\alpha\ne 0$,
and $P_{0,j}$ is $\false$ for every $j\in\{1,2,\dots,m\}$.
Moreover, for each $i\in[n]$ and $j\in[m]$, the entry $P_{i,j}$ is $\true$ if and only if there exists an index $\ell\in[j+1]$ such that $P_{i-1,\ell-1}$ is $\true$ and $v_i(\{g_\ell,g_{\ell+1},\dots,g_j\})=\alpha$. Thus, each entry can be computed in $O(m)$ time, and all the entries can be filled in $O(nm^2)$ time.
The desired allocation exists if $P_{n,m}=\true$, and such an allocation can be constructed in linear time by backtracking the table. Our algorithm is formally described as Algorithm~\ref{alg:EQ}.

\begin{algorithm}[H]
    \caption{Dynamic Programming for EQ}\label{alg:EQ}
    \For{$p\gets 0$ \KwTo $m$}{
        Let $\alpha \ot \sum_{k=1}^pv_i(g_k)$\;
        \tcc{Construct a table with respect to $\alpha$}
        
        $P_{0,0}\ot \true$\;
        \For{$i\gets 1$ \KwTo $n$}{
            \lIf{$\alpha=0$}{$P_{i,0}\ot \true$}
            \lElse{$P_{i,0}\ot \false$}
        }
        
        \lFor{$j\gets 1$ \KwTo $m$}{$P_{0,j}\ot \false$}
        \For{$i\gets 1$ \KwTo $n$}{
            \For{$j\gets 1$ \KwTo $m$}{
                $P_{i,j}\ot \false$\;
                \For{$\ell\gets j+1$ \KwTo $1$}{
                    \lIf{$P_{i-1,\ell-1}=\true$ and $\sum_{k=\ell}^j v_i(g_k)=\alpha$}{$P_{i,j}\ot \true$}
                }
            }
        }
        \If{$P_{n,m}=\true$}{
            \tcc{Output an allocation $\bA$ such that $v_i(A_i)=\alpha$ for all $i\in N$}
            $j\ot m$\;
            \For{$i\gets n$ \KwTo $1$}{
                Let $\ell$ be an index in $[j+1]$ such that $P_{i-1,\ell-1}=\true$ and $\sum_{k=\ell}^jv_i(g_k)=\alpha$\;
                Let $A_i\ot \{g_\ell,g_{\ell+1},\dots,g_j\}$ and $j\ot \ell-1$\;
                \If{$j=0$}{
                    $A_t\ot\emptyset$ for each $t=1,2,\dots,i-1$ and \textbf{break}\;
                }
            }
            \Return $(A_1,\dots,A_n)$\;
        }
     }
    \Return ``No equitable allocation exists''\;
\end{algorithm}

This algorithm implies the following theorem.
\begin{theorem}\label{thm:equit}
For the fixed-order setting, the EQ problem of CAP can be solved in $O(nm^3)$ time.
\end{theorem}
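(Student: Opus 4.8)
The plan is to reduce the EQ problem to a small number of reachability questions on a dynamic-programming table, one for each candidate value of the common utility. The first observation is that in any order-consistent contiguous allocation $\bA$, agent $1$'s bundle $A_1$ is a (possibly empty) prefix $\{g_1,\dots,g_j\}$ of the path, since $k_1=1$ and $A_1=\{g_1,\dots,g_{k_2-1}\}$. Hence if $\bA$ is equitable with common value $\alpha$, then $\alpha=v_1(A_1)$ must lie in the set $\{v_1(\emptyset),v_1(\{g_1\}),\dots,v_1(\{g_1,\dots,g_m\})\}$, which has at most $m+1$ elements. So it suffices to decide, for each such $\alpha$, whether an order-consistent contiguous allocation exists in which every agent values its own bundle exactly $\alpha$, and to output one when it does.

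For a fixed $\alpha$, I would build the boolean table $P$ with entries indexed by $i\in\{0,1,\dots,n\}$ and $j\in\{0,1,\dots,m\}$, where $P_{i,j}$ records whether the first $j$ items can be partitioned into contiguous blocks and handed to agents $1,\dots,i$ in order so that each of these $i$ agents receives value exactly $\alpha$. The base cases are $P_{0,0}=\true$, $P_{0,j}=\false$ for $j\ge 1$, and $P_{i,0}=\true$ iff $\alpha=0$. The recurrence splits on the left endpoint $\ell$ of agent $i$'s block: $P_{i,j}=\true$ iff there is $\ell\in\{1,\dots,j+1\}$ with $P_{i-1,\ell-1}=\true$ and $v_i(\{g_\ell,\dots,g_j\})=\alpha$, where $\ell=j+1$ encodes an empty block for agent $i$. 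An equitable allocation with value $\alpha$ exists precisely when $P_{n,m}=\true$, and one is recovered by backtracking from $(n,m)$: repeatedly pick a valid $\ell$, set $A_i=\{g_\ell,\dots,g_j\}$, and move to $(i-1,\ell-1)$ (assigning empty bundles to any remaining agents once $j$ reaches $0$).

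For the running time, there are $O(m)$ candidate values of $\alpha$. For each, the table has $O(nm)$ entries, and a single entry $P_{i,j}$ is evaluated by sweeping $\ell$ from $j+1$ down to $1$ while maintaining the running sum $\sum_{k=\ell}^{j}v_i(g_k)$, costing $O(m)$ time per entry. This gives $O(m)\cdot O(nm)\cdot O(m)=O(nm^3)$ overall, with the backtracking dominated by this bound. Correctness follows by a straightforward induction on $i$ showing that $P_{i,j}=\true$ if and only if the claimed partial allocation exists, using additivity of $v_i$ in the induction step.

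The main obstacle — in fact the only subtle point — is justifying that the search over $\alpha$ can be restricted to the $m+1$ values taken by $v_1$ on prefixes of the path; this hinges on the order-consistency constraint forcing $A_1$ to be a prefix, and one must remember to include the empty prefix, i.e.\ the case $\alpha=0$ in which some or all agents receive nothing. Everything else is bookkeeping: the recurrence, the empty-block convention via $\ell=j+1$, and the incremental maintenance of block values so that each table entry is filled in linear time.
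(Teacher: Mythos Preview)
Your proposal is correct and follows essentially the same approach as the paper: restrict the candidate common value $\alpha$ to the $m+1$ prefix values of $v_1$, and for each $\alpha$ fill a boolean DP table $P_{i,j}$ with the same recurrence and base cases, yielding $O(nm^2)$ per $\alpha$ and $O(nm^3)$ overall, with backtracking to recover the allocation. The paper's Algorithm~\ref{alg:EQ} is exactly this procedure.
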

We remark that Algorithm~\ref{alg:EQ} can be easily extended to the case where the valuations are not restricted to additive.

\subsection{Algorithms for E-max, PROP, and MMS} \label{sec:greedy}
In this subsection, we provide polynomial-time algorithms to solve E-max, PROP, and MMS.

We first consider the E-max problem.
Here, we assume that $m\ge n$ since otherwise the E-max value must be $0$.
The E-max value must be in 
\begin{align*}
S=\{0\}\cup \{v_i(\{g_j,g_{j+1},\dots,g_\ell\})\mid i\in N,~j,\ell\in[m],~j<\ell\}.
\end{align*}
Note that $|S|=O(nm^2)$.
Hence, it is sufficient to solve the problem of deciding whether there exists an order-consistent contiguous allocation $\bA$ such that $\min_{i\in N}v_i(A_i)\ge \alpha$ for a given $\alpha\in S$. 
If such an allocation exists, the E-max value is at least $\alpha$.
Let $k_0=0$ and $k_i=\min\{k\mid v_i(g_{k_{i-1}+1})+\dots+v_i(g_k)\ge \alpha\}$ for each $i\in[n]$.
Then, $k_1,\dots,k_n\in\{0,1,\dots,m\}$ if and only if a desired allocation exists for the decision problem.
Indeed, if such indices exist, the egalitarian social welfare is at least $\alpha$ for the allocation $\bA$ such that $A_i=\{g_{k_{i-1}+1},\dots,g_{k_i}\}$ for each $i\in\{1,2,\dots,n-1\}$ and $A_n=\{g_{k_{n-1}+1},\dots,g_{m}\}$.
Thus, the decision problem can be solved in linear time.
The algorithm is described as Algorithm~\ref{alg:E-max}

\begin{algorithm}[H]
    \caption{Polynomial-time Algorithm to decide E-max is at least $\alpha$}\label{alg:E-max}
    Let $k_0\ot 0$\;
    \For{$i\gets 1$ \KwTo $n-1$}{
        $u\ot 0$\;
        \For{$k\gets k_{i-1}$ \KwTo $m$}{
            \If{$u\ge \alpha$}{
                $k_i\ot k$\;
                $A_i\ot \{g_{k_{i-1}+1},\dots,g_{k_i}\}$\;
                \textbf{break}\;
            }
            \If{$k=m$}{
                \Return ``No contiguous allocation has egalitarian social welfare at least $\alpha$''\;
            }
            $u\ot u+v_i(g_{k+1})$\;
            
        }
    }
    Let $A_n\ot\{g_{k_{n-1}+1},\dots,g_m\}$\;
    \If{$v_n(A_n)<\alpha$}{
        \Return ``No contiguous allocation has egalitarian social welfare at least $\alpha$''\;
    }
    \Return $(A_1,\dots,A_n)$\;
\end{algorithm}

The minimum $\alpha$ for which the answer to the decision problem is true can be found by the binary search. 
Thus, the decision problem is solved at most $O(\log |S|)=O(\log m)$ times.
Additionally, it takes $O(nm^2\log m)$ time to sort the elements of $S$.
Therefore, we get the following theorem.
\begin{theorem}\label{thm:Emax}
For the fixed-order setting, the E-max problem of CAP can be solved in $O(nm^2\log m)$ time.
\end{theorem}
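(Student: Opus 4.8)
The plan is to establish Theorem~\ref{thm:Emax} by combining the decision procedure (Algorithm~\ref{alg:E-max}) with a binary search over the finite set $S$ of candidate egalitarian values, and then to account for the preprocessing cost. First I would argue correctness of the decision routine: for a fixed threshold $\alpha$, I claim that the greedy assignment $k_i=\min\{k: v_i(g_{k_{i-1}+1})+\dots+v_i(g_k)\ge\alpha\}$ succeeds (i.e.\ all $k_i\in\{0,1,\dots,m\}$ and the leftover block $A_n=\{g_{k_{n-1}+1},\dots,g_m\}$ satisfies $v_n(A_n)\ge\alpha$) if and only if there exists an order-consistent contiguous allocation $\bA$ with $\min_{i\in N}v_i(A_i)\ge\alpha$. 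The ``if'' direction is the substantive one: given any feasible allocation with cutpoints $k_1^\circ\le\dots\le k_{n-1}^\circ$, an exchange/induction argument shows $k_i\le k_i^\circ$ for every $i$ — intuitively, the greedy procedure never wastes items, so it always leaves at least as much for the remaining agents. In particular $k_{n-1}\le k_{n-1}^\circ$, and since the feasible allocation gives agent $n$ the suffix from $k_{n-1}^\circ+1$, monotonicity (additivity and nonnegativity of $v_n$) yields $v_n(\{g_{k_{n-1}+1},\dots,g_m\})\ge v_n(\{g_{k_{n-1}^\circ+1},\dots,g_m\})\ge\alpha$. The ``only if'' direction is immediate: the greedy allocation is itself order-consistent and contiguous.

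Next I would bound the running time. The decision routine runs in $O(m)$ time (it scans the items once, advancing $k$ and accumulating utility), or more crudely $O(nm)$; either way it is polynomial. The E-max value lies in $S=\{0\}\cup\{v_i(\{g_j,\dots,g_\ell\}): i\in N,\ j<\ell\}$, which has $|S|=O(nm^2)$, and it is monotone in the sense that the decision answer is \true\ for all $\alpha$ below the optimum and \false\ above it; hence a binary search over the sorted list of $S$ finds the optimum using $O(\log|S|)=O(\log m)$ calls to the decision routine. Sorting $S$ costs $O(nm^2\log(nm^2))=O(nm^2\log m)$. Combining, the total is $O(nm^2\log m)+O(\log m)\cdot O(m)=O(nm^2\log m)$, as claimed. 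I would also note the easy base case $m<n$, where any contiguous allocation must leave some agent with an empty bundle, so the E-max value is $0$ and can be reported immediately.

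The main obstacle I anticipate is making the greedy-optimality exchange argument fully rigorous, in particular the inductive claim $k_i\le k_i^\circ$ and the careful treatment of the last agent: one must be sure the base case ($k_0=k_0^\circ=0$) and the inductive step (if $k_{i-1}\le k_{i-1}^\circ$ then the greedy agent $i$, starting no later, reaches threshold $\alpha$ no later, so $k_i\le k_i^\circ$) interact correctly, and that the definition $k_i=\min\{\dots\}$ is well-defined exactly when the set is nonempty — which is precisely the failure condition the algorithm detects. A secondary subtlety is confirming that restricting the candidate thresholds to $S$ loses nothing: the optimum egalitarian value is attained by some agent on some contiguous block in an optimal allocation, hence is an element of $S$, so the binary search over $S$ indeed locates it. Everything else is bookkeeping.
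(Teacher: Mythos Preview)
Your proposal is correct and follows essentially the same approach as the paper: the same greedy decision routine for a fixed threshold $\alpha$, the same candidate set $S$ of size $O(nm^2)$, binary search over the sorted $S$, and the same accounting that sorting dominates at $O(nm^2\log m)$. If anything, you are more explicit than the paper about the exchange argument ($k_i\le k_i^\circ$ by induction) justifying greedy optimality, which the paper asserts but does not spell out.
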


This algorithm implies an FPT algorithm with respect to the number of agents for the flexible-order setting, similar to the discussion for Corollary~\ref{cor:Umax}, by considering all possible orderings of agents.
\begin{corollary}\label{cor:Emax}
For the flexible-order setting, 
the E-max problem of CAP is FPT with respect to the number of agents.
\end{corollary}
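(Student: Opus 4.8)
The plan is to mirror the proof of Corollary~\ref{cor:Umax} exactly, now using the fixed-order E-max algorithm of Theorem~\ref{thm:Emax} as the subroutine. The key observation is that every contiguous allocation is order-consistent with respect to \emph{some} ordering of the agents: sort the agents by the left endpoints of their (nonempty) blocks along the path, and insert the agents with empty bundles arbitrarily among the resulting consecutive blocks. Hence an E-max allocation in the flexible-order setting coincides with an E-max order-consistent allocation for at least one of the $n!$ permutations of $N$.

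Concretely, I would iterate over all $n!$ permutations $\pi$ of $N$; for each $\pi$, relabel the agents according to $\pi$, run the algorithm of Theorem~\ref{thm:Emax} to obtain (in $O(nm^2\log m)$ time) an order-consistent contiguous allocation of maximum egalitarian welfare among all allocations consistent with $\pi$, together with its egalitarian welfare value; and finally output the allocation with the largest such value over all $\pi$ (breaking ties arbitrarily). For correctness: on the one hand, each permutation yields a genuine contiguous allocation, so the algorithm never overshoots the true flexible-order E-max value; on the other hand, by the observation above, an optimal flexible-order allocation is order-consistent for some $\pi$, and for that $\pi$ the subroutine returns a value at least as large, so the algorithm never undershoots either. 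The running time is $n!\cdot O(nm^2\log m) = O(n^{n+1}m^2\log m)$, which is FPT with respect to the number of agents~$n$.

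There is no real obstacle here, since the reduction is entirely routine once Theorem~\ref{thm:Emax} is in hand; the only minor point deserving an explicit sentence is the treatment of agents that receive an empty bundle when mapping an arbitrary contiguous allocation to a permutation, which is harmless because an empty block may be placed between any two consecutive nonempty blocks without changing any agent's utility.
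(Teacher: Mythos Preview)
Your proof is correct and follows precisely the approach the paper indicates: enumerate all $n!$ orderings of the agents and apply the fixed-order E-max algorithm of Theorem~\ref{thm:Emax} to each, exactly mirroring the argument for Corollary~\ref{cor:Umax}. The paper itself merely states that the result follows ``similar to the discussion for Corollary~\ref{cor:Umax}, by considering all possible orderings of agents,'' so your write-up is in fact more detailed than the paper's own treatment.
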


A PROP allocation can be constructed by setting $k_0=0$, $k_i=\min\{k\mid v_i(g_{k_{i-1}+1})+\dots+v_i(g_k)\ge v_i(M)/n\}$ for each $i\in [n-1]$, and $k_n=m$, if exists.
As $v_i(M)$ can be computed in $O(m)$ time for each $i\in N$, we get the following theorem.
\begin{theorem}\label{thm:prop}
For the fixed-order setting, the PROP problem of CAP can be solved in $O(nm)$ time.
\end{theorem}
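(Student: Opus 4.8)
The plan is to use the greedy cut-point construction sketched just before the theorem and to prove that it succeeds exactly when a PROP order-consistent contiguous allocation exists. Set $k_0=0$ and, for $i=1,\dots,n-1$, let $k_i$ be the smallest index $k\in\{k_{i-1},k_{i-1}+1,\dots,m\}$ with $v_i(\{g_{k_{i-1}+1},\dots,g_k\})\ge v_i(M)/n$, declaring $k_i$ \emph{undefined} if no such $k$ exists (in particular whenever $k_{i-1}$ is undefined). Finally put $k_n=m$. The algorithm outputs $\bA$ with $A_i=\{g_{k_{i-1}+1},\dots,g_{k_i}\}$ if $k_1,\dots,k_{n-1}$ are all defined and $v_n(A_n)\ge v_n(M)/n$, and otherwise reports that no PROP allocation exists. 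First I would check soundness of a successful run: the $k_i$ are nondecreasing and run from $0$ to $m$, so $\bA$ is a contiguous, order-consistent allocation; each agent $i<n$ gets value at least $v_i(M)/n$ by the choice of $k_i$, and agent $n$ is verified explicitly, so $\bA$ is PROP.

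The key step is the converse, which I would establish by a prefix-domination (exchange) argument. Let $(A_1',\dots,A_n')$ be any PROP order-consistent contiguous allocation, with cut points $0=k_0'\le k_1'\le\dots\le k_n'=m$, so $A_i'=\{g_{k_{i-1}'+1},\dots,g_{k_i'}\}$. I claim by induction on $i$ that $k_i$ is defined and $k_i\le k_i'$. The base case $k_0=k_0'=0$ is immediate. For the step, assume $k_{i-1}$ is defined and $k_{i-1}\le k_{i-1}'$ with $i\le n-1$. Since $(A_j')$ is PROP, $v_i(A_i')\ge v_i(M)/n$; because $k_{i-1}\le k_{i-1}'\le k_i'\le m$, the block $\{g_{k_{i-1}+1},\dots,g_{k_i'}\}$ contains $A_i'$, and by additivity and nonnegativity of $v_i$ its value is at least $v_i(A_i')\ge v_i(M)/n$. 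Hence $k=k_i'$ is a feasible choice in the definition of $k_i$, so $k_i$ is defined and $k_i\le k_i'$. Taking $i=n-1$ gives $k_{n-1}\le k_{n-1}'$, so $A_n=\{g_{k_{n-1}+1},\dots,g_m\}\supseteq\{g_{k_{n-1}'+1},\dots,g_m\}=A_n'$ and therefore $v_n(A_n)\ge v_n(A_n')\ge v_n(M)/n$; thus a successful run occurs, contradicting the assumption that the algorithm failed. This shows a PROP allocation exists if and only if the algorithm succeeds, and in that case the algorithm produces one.

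Finally, for the running time: computing $v_i(M)=\sum_{g\in M}v_i(g)$ costs $O(m)$ per agent, i.e. $O(nm)$ overall; the greedy sweep computing $k_1,\dots,k_{n-1}$ touches each item only a bounded number of times (the sum of block lengths is $O(m+n)$), and the final check for agent $n$ is $O(m)$. Hence the total time is $O(nm)$. I do not anticipate a genuine obstacle here; the only subtlety worth stating carefully is the bookkeeping of the ``undefined'' case, so that the induction simultaneously proves definedness of $k_i$ and the inequality $k_i\le k_i'$, and that the threshold $v_i(M)/n$ need not be an integer (which is harmless, as the argument only uses monotonicity of additive nonnegative valuations).
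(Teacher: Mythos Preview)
Your proposal is correct and follows essentially the same approach as the paper: the paper states the identical greedy cut-point construction and the $O(nm)$ bound coming from computing the thresholds $v_i(M)$, leaving correctness largely implicit (by analogy with the E-max decision procedure). Your prefix-domination induction is exactly the standard way to certify the ``only if'' direction, so you are simply filling in details the paper omits rather than taking a different route.
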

            

Moreover, an MMS allocation can be constructed by setting $k_0=0$, $k_i=\min\{k\mid v_i(g_{k_{i-1}+1})+\dots+v_i(g_k)\ge \MMS(i)\}$ for each $i\in[n-1]$, and $k_n=m$, if exists.
Here, the value of $\MMS(i)$ for each $i\in N$ can be computed by solving the E-max problem in a situation where there are $n$ agents with valuation function $v_i$.
The E-max problem can be solved in $O(nm^2\log m)$ time by Theorem~\ref{thm:Emax}. 
Therefore, we obtain the following theorem.
\begin{theorem}\label{thm:mms}
For the fixed-order setting, the MMS problem of CAP can be solved in $O(n^2m^2\log m)$ time.
\end{theorem}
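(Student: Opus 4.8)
The plan is to follow the same template already established for PROP in Theorem~\ref{thm:prop}: reduce the construction of an MMS allocation to a sequence of ``fill the next agent greedily up to a threshold'' steps, where the threshold for agent $i$ is $\MMS(i)$ rather than $v_i(M)/n$. First I would observe that, once all the values $\MMS(1),\dots,\MMS(n)$ are known, the greedy rule $k_0=0$, $k_i=\min\{k : v_i(g_{k_{i-1}+1})+\dots+v_i(g_k)\ge \MMS(i)\}$ for $i\in[n-1]$, and $k_n=m$ either produces a valid contiguous order-consistent allocation with $v_i(A_i)\ge\MMS(i)$ for every $i$, or fails; and I would argue that if it fails then no MMS allocation exists. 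The correctness of this greedy step is the analogue of the PROP argument: among all order-consistent allocations that give each of the first $i$ agents value at least their threshold, the greedy choice leaves the largest possible suffix $\{g_{k_i+1},\dots,g_m\}$ for the remaining agents, so an exchange/induction argument shows greedy succeeds whenever any feasible allocation does. This part is essentially identical to the reasoning behind Theorems~\ref{thm:prop} and~\ref{thm:Emax}, so I would state it briefly rather than belabor it.

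The substantive content is computing $\MMS(i)$ for each agent $i$. Here I would use the key conceptual point: $\MMS(i)=\max_{\bA\in\mathcal A}\min_{j\in N}v_i(A_j)$ is by definition the optimum of an E-max instance — namely the instance with the same $m$ items on the path, the same $n$ agents, but where \emph{every} agent uses the valuation function $v_i$. By Theorem~\ref{thm:Emax}, the E-max problem in the fixed-order setting is solvable in $O(nm^2\log m)$ time; and since all agents are identical in this auxiliary instance, the order-consistent restriction costs nothing — any contiguous allocation can be relabeled to be order-consistent without changing the multiset of bundle values, so the fixed-order E-max optimum equals the genuine $\MMS(i)$ over all contiguous allocations. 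I should make this equivalence explicit, because the definition of $\MMS(i)$ in the Preliminaries ranges over all contiguous allocations $\mathcal A$, not just order-consistent ones, so a one-line remark that identical valuations make the two coincide is needed.

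Putting the pieces together: for each of the $n$ agents I run the E-max algorithm of Theorem~\ref{thm:Emax} once, at cost $O(nm^2\log m)$ each, for a total of $O(n^2m^2\log m)$ to obtain all thresholds; then the single greedy sweep costs $O(nm)$, which is dominated. Hence the MMS problem in the fixed-order setting is solvable in $O(n^2m^2\log m)$ time, as claimed. The main obstacle — really the only place where care is required — is justifying that the greedy threshold sweep is correct, i.e. that failing the sweep certifies nonexistence of \emph{any} order-consistent MMS allocation; I would handle this with the standard prefix-exchange argument (if some feasible allocation exists, greedily shrinking each agent's block to the minimal prefix meeting its threshold only enlarges the remaining suffix, preserving feasibility), exactly parallel to the PROP case. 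Everything else is bookkeeping on running times.
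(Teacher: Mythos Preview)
Your proposal is correct and follows essentially the same approach as the paper: compute each $\MMS(i)$ by running the fixed-order E-max algorithm of Theorem~\ref{thm:Emax} on the instance with $n$ identical copies of $v_i$, then apply the greedy threshold sweep exactly as in the PROP case. Your additional remarks---the prefix-exchange justification for the greedy sweep and the observation that identical valuations make the order-consistent and general contiguous optima coincide when computing $\MMS(i)$---are not spelled out in the paper but are sound and make the argument more complete.
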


It is worth mentioning that the problems of E-max, PROP, and MMS can be solved in polynomial time, even for general valuation functions, by utilizing a table similar to the one used in Algorithm~\ref{alg:EQ}.

\subsection{NP-hardness of EF1} \label{sec:NPEF1}
Here, we demonstrate that the problem of checking the existence of an EF1 allocation for CBP in the fixed-order setting is NP-complete.

\begin{theorem}\label{thm:fixed-EF1-hard}
For the fixed-order setting, the EF1 problem of CBP is NP-hard.
\end{theorem}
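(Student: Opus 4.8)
The plan is to reduce from an NP-complete variant of a satisfiability or partitioning problem, exploiting the fact that in the fixed-order setting the only freedom is the choice of the $n-1$ ``cut points'' $1 = k_1 \le k_2 \le \dots \le k_{n+1} = m+1$ along the path. The key structural observation is that EF1 in a contiguous order-consistent allocation is essentially a \emph{local} condition between consecutive agents: agent $i$ can only envy agents $i-1$ and $i+1$ up to one item in a meaningful way if the instance is designed with sparse, localized valuations. So the strategy is to build gadgets that are laid out left-to-right on the path, each gadget forcing a binary choice (the position of a cut point within the gadget), and to wire the agents' valuations so that an EF1 allocation exists if and only if the underlying SAT/partition instance is satisfiable.

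Concretely, I would first fix the hardness source. A natural candidate is \textsc{3-SAT} or, following the flavor used elsewhere in the paper, a sparse occurrence-bounded variant. For each variable $x_j$ I would create a ``variable gadget'': a short block of items together with one or two agents whose valuations are symmetric in such a way that any EF1-feasible cut splits the block in exactly one of two ways, encoding $x_j = \text{true}$ or $x_j = \text{false}$. For each clause I would create a ``clause agent'' whose positively-valued items are the (at most three) literal-items corresponding to its literals, scattered among the relevant variable gadgets; this clause agent gets a nonzero bundle only if at least one literal is set true, and I would calibrate the utilities so that the clause agent is EF1-satisfied precisely in that case (e.g. its neighbors' bundles are worth at most one ``extra'' item to it, so it must itself collect at least one literal-item). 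Padding/dummy items and dummy agents of value $0$ between gadgets would be inserted to decouple the gadgets, so that envy never propagates across gadget boundaries and the cut points can be chosen gadget-by-gadget.

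The main steps, in order: (1) describe the construction — the item sequence, the agent set, and the binary valuations — and verify it is polynomial-size; (2) prove the forward direction: given a satisfying assignment, exhibit the cut points and check EF1 for every pair of agents, which by the locality of the construction reduces to checking consecutive agents within and across gadgets; (3) prove the reverse direction: given any EF1 order-consistent allocation, argue that within each variable gadget the cut must land in one of the two ``canonical'' positions (otherwise some agent in that gadget envies a neighbor by two or more items), read off an assignment, and show each clause agent's EF1 constraint forces its clause to be satisfied; (4) conclude membership in NP (an allocation is a polynomial certificate, EF1 is checkable in polynomial time), giving NP-completeness.

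The hard part will be step (3), and within it the gadget calibration in step (1) that makes it work: I need the valuations tuned so that ``bad'' cut positions are not merely suboptimal but genuinely violate EF1 (a two-item envy gap), while simultaneously ensuring the two canonical positions never create envy elsewhere — and I must do this with \emph{binary} additive valuations, which removes the ability to use large weights to force gaps, so the forcing has to come from having several value-$1$ items that an agent would need to grab. Keeping the instance sparse (bounded occurrences per item and per agent, as the paper emphasizes) adds a further constraint on how many clause agents can reference the same literal-item, which may require duplicating literal-items with consistency gadgets linking the copies. Getting the envy bookkeeping exactly right across these linked copies, and confirming that no unintended long-range envy arises because the path forces bundles to be intervals, is where the bulk of the careful work lies.
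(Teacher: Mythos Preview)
Your clause-gadget mechanism does not work in the fixed-order setting. You propose that each clause agent's positively valued items are the literal-items scattered among the variable gadgets, and that the clause agent must ``collect at least one literal-item'' to be EF1-satisfied. But in an order-consistent contiguous allocation the clause agent's bundle is the interval at its fixed position on the path; if the literal-items live inside variable gadgets that you have explicitly decoupled with padding, the clause agent's interval cannot contain any of them. Flipping the envy direction does not help either: a clause agent valuing three items spread over three distinct far-away bundles envies each holder by at most one, so EF1 is vacuous for it regardless of the assignment. Your fallback of ``duplicating literal-items with consistency gadgets linking the copies'' is exactly the crux of the reduction, and nothing in the proposal indicates how such a gadget would enforce consistency using only binary values and interval bundles.

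The deeper issue is the locality premise. You treat long-range envy as a nuisance to be suppressed; the paper's proof uses it as the central tool. For each clause $C_j$ the paper places a block $Q_j$ of seven items together with three clause agents who each value all seven; any EF1 split of $Q_j$ among them is a $(3,2,2)$ split in some order, so exactly one clause agent holds three items. The \emph{literal} agent $\ell_{j,t}$ (one of the variable agents $x_i,\overline{x}_i$, sitting far away next to its own three-item block $V_i$) additionally values three consecutive items of $Q_j$, so it values that lucky clause agent's bundle at $3$; EF1 then forces this literal agent to have value at least $2$ in its own bundle, which pins the cut inside $V_i$ to the side making that literal true. Two stopper agents $s_1,s_2$ at the front (who must receive value $0$) together with oversized divider blocks $D_k$ anchor all remaining cuts, again via long-range envy from $s_1,s_2$ toward whoever holds the two end items of each $D_k$. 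Thus the working direction is the reverse of yours: variable agents are constrained by their envy toward the clause agents' bundles, rather than clause agents reaching into variable gadgets.
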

\begin{proof}
We provide a reduction from 3SAT, which is known to be NP-complete~\cite{GJ1979}.
Let $\varphi$ be a 3SAT formula, where $\varphi=C_1\wedge C_2\wedge\dots\wedge C_m$ and each $C_j$ is a clause of the form $C_j=\ell_{j,1}\vee\ell_{j,2}\vee\ell_{j,3}$. In each clause, every literal $\ell_{j,1}, \ell_{j,2}, \ell_{j,3}$ is one of the variables $x_1,\dots,x_n$ or their negations $\overline{x}_1,\dots,\overline{x}_n$.
Without loss of generality, we assume that $m\ge 1$.

We create $n$ positive variable agents $N_x=\{x_1,\dots,x_n\}$, $n$ negative variable agents $N_{\overline{x}}=\{\overline{x}_1,\dots,\overline{x}_n\}$, and $3m$ clause agents $N_c=\bigcup_{j=1}^m \{c_{j,1},c_{j,2},c_{j,3}\}$.
Additionally, we create $n+m$ divider agents $N_d=\{d_1,\dots,d_{n+m}\}$ and $2$ stopper agents $N_s=\{s_1,s_2\}$. 
The set of agents is $N=N_x\cup N_c\cup N_d\cup N_s$, which consists of $r\coloneqq 3n+4m+2$ agents.
Suppose that the agents are ordered as follows:
\begin{align*}
\begin{split}
&s_1, s_2, d_1, x_1, \overline{x}_1, d_2, x_2, \overline{x}_2, d_3, \ldots, 
d_n, x_n, \overline{x}_n, d_{n+1},\\ 
&\quad c_{1,1}, c_{1,2}, c_{1,3}, d_{n+2},
c_{2,1}, c_{2,2}, c_{2,3}, d_{n+3},
\ldots, d_{n+m}, c_{m,1}, c_{m,2}, c_{m,3}.
\end{split}
\end{align*}

We create $n$ sets of variable items $V_1,\dots,V_n$,
$m$ sets of clause items $Q_1,\dots,Q_m$, and
$n+m$ sets of divider items $D_1,\dots,D_{(n+m)}$.
Each set of variable, clause, and divider items consists of 
$3$, $7$, and $r+2$ indivisible items, respectively.
The set of items is $M=\bigcup_{i=1}^n V_i\cup\bigcup_{j=1}^m Q_j\cup \bigcup_{k=1}^{n+m}D_k$, which contains $3n+7m+(r+2)(n+m)$ indivisible items.
The items are ordered according to the following:
\begin{align*}
&D_1, V_1, D_2, V_2,\dots, D_n,V_n,D_{n+1},Q_1,D_{n+2},\dots,D_{n+m},Q_m.
\end{align*}

For each set of variable items $V_i$ with $i\in[n]$, 
$x_i$ wants the left two items, $\overline{x}_i$ wants the right two items, and all the other agents do not want any items.
For each set of clause items $Q_j$ with $j\in[m]$, 
$\ell_{j,1}$ wants the three leftmost items,
$\ell_{j,2}$ wants the middle three ($3$rd, $4$th, and $5$th) items,
$\ell_{j,3}$ wants the three rightmost items,
$c_{j,1},c_{j,2},c_{j,3}$ want all the seven items,
and the other agents do not want any items.
For each divider set $D_i$ with $i\in[n+m]$, 
$s_1$ values the right two items, $s_2$ values the left two items if $i\ge 2$, 
$d_i$ values all the $m+2$ items, and the other agents do not value any items.

Intuitively, the truth assignment of $\varphi$ corresponds to whether the middle item of $V_i$ is allocated to $x_i$ or $\overline{x}_i$.
Whether the truth assignment satisfies the $j$th clause $C_j$ corresponds to whether $c_{j,1}, c_{j,2}, c_{j,3}$ are not envied by more than one item.

For example, if $\varphi=C_1\wedge C_2$ where $C_1=x_1\vee x_2\vee \overline{x}_3$ and $C_2=x_2\vee x_3\vee \overline{x}_4$, the valuations of the agents in the reduced instance is given as in \Cref{tab:fixed-EF1-hard}.

\begin{table*}[htbp]
\centering
\caption{Reduced instance of the EF1 problem of CBP from $(x_1\wedge x_2\wedge\overline{x}_3)\wedge(x_2\wedge x_3\wedge \overline{x}_4)$. The allocation represented by red color corresponds to a truth assignment of $(x_1,x_2,x_3,x_4)=(\true,\false,\false,\false)$.}\label{tab:fixed-EF1-hard}
\tabcolsep = 1.5mm
\renewcommand{\arraystretch}{.5}
\scalebox{1}{
\begin{tabular}{c||d|c|d|c|d|c|d|c|d|c|d|c}
\toprule
            &$D_1$   &$V_1$&$D_2$   &$V_2$&$D_3$   &$V_3$&$D_4$   &$V_4$&$D_5$   &$Q_1$&$D_6$   &$Q_2$\\\midrule
$s_1$       &0$\cdots$011& 000     &0$\cdots$011& 000     &0$\cdots$011& 000     &0$\cdots$011& 000    &0$\cdots$011& 0000000 &0$\cdots$011& 0000000\\
$s_2$       &00$\cdots$00& 000     &110$\cdots$0& 000     &110$\cdots$0& 000     &110$\cdots$0& 000    &110$\cdots$0& 0000000 &110$\cdots$0& 0000000\\\hline
$d_1$       &\alert{11$\cdots$1}1& 000     &00$\cdots$00& 000     &00$\cdots$00& 000     &00$\cdots$00& 000    &00$\cdots$00& 0000000 &00$\cdots$00& 0000000\\
$x_1$       &00$\cdots$0\alert{0}& \alert{11}0     &00$\cdots$00& 000     &00$\cdots$00& 000     &00$\cdots$00& 000    &00$\cdots$00& 1110000 &00$\cdots$00& 0000000\\
$\overline{x}_1$ &00$\cdots$00& 01\alert{1}     &\alert{0}0$\cdots$00& 000     &00$\cdots$00& 000     &00$\cdots$00& 000    &00$\cdots$00& 0000000 &00$\cdots$00& 0000000\\
$d_2$       &00$\cdots$00& 000     &1\alert{1$\cdots$1}1& 000     &00$\cdots$00& 000     &00$\cdots$00& 000    &00$\cdots$00& 0000000 &00$\cdots$00& 0000000\\
$x_2$       &00$\cdots$00& 000     &00$\cdots$0\alert{0}& \alert{1}10     &00$\cdots$00& 000     &00$\cdots$00& 000    &00$\cdots$00& 0011100 &00$\cdots$00& 1110000\\
$\overline{x}_2$ &00$\cdots$00& 000     &00$\cdots$00& 0\alert{11}     &\alert{0}0$\cdots$00& 000     &00$\cdots$00& 000    &00$\cdots$00& 0000000 &00$\cdots$00& 0000000\\
$d_3$       &00$\cdots$00& 000     &00$\cdots$00& 000     &1\alert{1$\cdots$1}1& 000     &00$\cdots$00& 000    &00$\cdots$00& 0000000 &00$\cdots$00& 0000000\\
$x_3$       &00$\cdots$00& 000     &00$\cdots$00& 000     &00$\cdots$0\alert{0}& \alert{1}10     &00$\cdots$00& 000    &00$\cdots$00& 0000000 &00$\cdots$00& 0011100\\
$\overline{x}_3$ &00$\cdots$00& 000     &00$\cdots$00& 000     &00$\cdots$00& 0\alert{11}     &\alert{0}0$\cdots$00& 000    &00$\cdots$00& 0000111 &00$\cdots$00& 0000000\\
$d_4$       &00$\cdots$00& 000     &00$\cdots$00& 000     &00$\cdots$00& 000     &1\alert{1$\cdots$1}1& 000    &00$\cdots$00& 0000000 &00$\cdots$00& 0000000\\
$x_4$       &00$\cdots$00& 000     &00$\cdots$00& 000     &00$\cdots$00& 000     &00$\cdots$0\alert{0}& \alert{1}10    &00$\cdots$00& 0000000 &00$\cdots$00& 0000000\\
$\overline{x}_4$ &00$\cdots$00& 000     &00$\cdots$00& 000     &00$\cdots$00& 000     &00$\cdots$00& 0\alert{11}    &\alert{0}0$\cdots$00& 0000000 &00$\cdots$00& 0000111\\
$d_5$       &00$\cdots$00& 000     &00$\cdots$00& 000     &00$\cdots$00& 000     &00$\cdots$00& 000    &1\alert{1$\cdots$1}1& 0000000 &00$\cdots$00& 0000000\\
$c_{1,1}$ & 00$\cdots$00    & 000     & 00$\cdots$00 & 000     & 00$\cdots$00 & 000     & 00$\cdots$00   & 000      & 00$\cdots$0\alert{0}  & \alert{111}1111   & 00$\cdots$00  & 0000000\\
$c_{1,2}$ & 00$\cdots$00    & 000     & 00$\cdots$00 & 000     & 00$\cdots$00 & 000     & 00$\cdots$00   & 000      & 00$\cdots$00  & 111\alert{11}11   & 00$\cdots$00  & 0000000\\
$c_{1,3}$ & 00$\cdots$00    & 000     & 00$\cdots$00 & 000     & 00$\cdots$00 & 000     & 00$\cdots$00   & 000      & 00$\cdots$00  & 11111\alert{11}   & \alert{0}0$\cdots$00  & 0000000\\
$d_6$       & 00$\cdots$00 & 000     &00$\cdots$00& 000     & 00$\cdots$00& 000     & 00$\cdots$00& 000     &00$\cdots$00& 0000000 &1\alert{1$\cdots$1}1& 0000000\\
$c_{2,1}$ & 00$\cdots$00    & 000      & 00$\cdots$00    & 000      & 00$\cdots$00    & 000      & 00$\cdots$00    & 000      & 00$\cdots$00  & 0000000   & 00$\cdots$0\alert{0}  & \alert{11}11111\\
$c_{2,2}$ & 00$\cdots$00    & 000      & 00$\cdots$00    & 000      & 00$\cdots$00    & 000      & 00$\cdots$00    & 000      & 00$\cdots$00  & 0000000   & 00$\cdots$00  & 11\alert{11}111\\
$c_{2,3}$ & 00$\cdots$00    & 000      & 00$\cdots$00    & 000      & 00$\cdots$00    & 000      & 00$\cdots$00    & 000      & 00$\cdots$00  & 0000000   & 00$\cdots$00  & 1111\alert{111}\\
\bottomrule
\end{tabular}
}
\end{table*}

We show that the reduced instance has an EF1 contiguous allocation if and only if the 3SAT formula is satisfiable.

Suppose that the given 3SAT formula $\varphi$ is satisfiable.
For a truth assignment that satisfies $\varphi$, let $I\subseteq [n]$ be the index set of variables assigned true.
In addition, for each $j\in[m]$, suppose that $\ell_{j,p_j}$ with $p_j\in\{1,2,3\}$ is true.
We construct an EF1 contiguous allocation from this assignment.
For each $i\in[n]$, the left item of $V_i$ is allocated to $x_i$ and the right item of $V_i$ is allocated to $\overline{x}_i$.
In addition, the middle item of $V_i$ is allocated to $x_i$ if $i\in I$ and to $\overline{x}_i$ otherwise.
For each $j\in[m]$, the items in $Q_j$ are allocated to $c_{j,1},c_{j,2},c_{j,3}$ where $c_{j,p_j}$ receives three items and each of the other two receives two items.
The divider items are allocated as follows:
\begin{itemize}
\item the leftmost item of $D_1$ is allocated to $d_1$,
\item the middle $r$ items of $D_k$ are allocated to $d_k$ for each $k\in[n+m]$,
\item the leftmost item of $D_{i+1}$ is allocated to $\overline{x}_i$ for each $i\in[n]$,
\item the rightmost item of $D_{i}$ is allocated to $x_i$ for each $i\in[n]$,
\item the leftmost item of $D_{n+j+1}$ is allocated to $c_{j,3}$ for each $j\in[m-1]$,
\item the rightmost item of $D_{n+j}$ is allocated to $c_{j,1}$ for each $j\in[m]$.
\end{itemize}
Then, it is not difficult to see that the allocation is EF1.

Conversely, suppose that an EF1 contiguous allocation $\bA$ exists for the reduced instance.
We show that the truth assignment such that the variables in $\{x\in N_x \mid v_{x}(A_x)=2\}$ are assigned to true is a satisfying assignment.
For each $k\in[n+m]$, agent $d_{k}$ must get at least one item in $D_k$, since otherwise, another agent gets at least two items of $D_k$ because $|D_k|>|N|$ and $d_{k}$ envies the agent by more than one item.
The rightmost two items of $D_1$ are not allocated to either $s_1$ or $s_2$. 
Otherwise, $d_1$ would receive at most one item from $D_1$ by the ordering of agents, while either $s_1$ or $s_2$ would receive at least $4$ items from $D_1$ due to $|D_1|=r+2=3n+4m+5\ge 9$, resulting in $d_1$ envying $s_1$ or $s_2$ by more than one item.
Thus, $v_{s_1}(A_{s_1})=v_{s_2}(A_{s_2})=0$.
This implies that the leftmost two items of $D_k$ with $k\in\{2,3,\dots,n+m\}$ are allocated to different agents.
Also, the rightmost two items of $D_k$ with $k\in\{1,2,\dots,n+m\}$ are allocated to different agents.
Hence, $A_{d_k}\subseteq D_k$ for each $k\in[n+m]$.
For each $i\in[n]$, the items in $V_i$ are allocated to $x_i$ or $\overline{x}_i$, and we have only two possibilities that 
$(v_{x_i}(A_{x_i}),v_{\overline{x}_i}(A_{\overline{x}_i}))$ is $(2,1)$ and $(1,2)$ since $\bA$ is EF1.
For each $j\in[m]$, the items in $Q_j$ are allocated to $c_{j,1},c_{j,2},c_{j,3}$, and we have only three possibilities that 
$(v_{c_{j,1}}(A_{c_{j,1}}),v_{c_{j,2}}(A_{c_{j,2}}),v_{c_{j,3}}(A_{c_{j,3}}))$ is $(3,2,2)$, $(2,3,2)$, and $(2,2,3)$ since $\bA$ is EF1.
Let $p_j\in\{1,2,3\}$ be the index $t$ such that $v_{c_{j,t}}(A_{c_{j,t}})=3$.
Then, by the EF1 condition, the agent $\ell_{j,p_j}\in N_x\cup N_{\overline{x}}$ must receive two desired items in $\bA$.
Therefore, the truth assignment such that the variables in $\{x\in N_x \mid v_{x}(A_x)=2\}$ are assigned to true is a satisfying assignment.
\end{proof}

\section{Flexible Order Setting} \label{sec:flexord}
In this section, we deal with the flexible-order setting of the problems, where no order of the agents is specified.

For the instance in Example~\ref{ex:fixed-order}, the allocation that agent $1$ receives $\{g_3,g_4\}$ and agent $2$ receives $\{g_1,g_2\}$ satisfies EF, U-max, E-max, MMS, PROP, and EQ. Note that this allocation is not order-consistent but contiguous.
It is known that an EF1 allocation~\cite{igarashi2023cut} and an MMS allocation~\cite{bouveret2017fair} always exist in the flexible order setting of CAP. 
Additionally, for each set of properties $X$ with $\emptyset\ne X\subseteq\{\textit{EF, PROP, EQ}\}$, deciding whether there exists a contiguous allocation satisfying all properties in $X$ is NP-complete~\cite{goldberg2020contiguous}.
Thus, we only focus on U-max and E-max problems.

\subsection{Approximation for U-max}
We begin with an observation for a special case where the valuations are binary additive and the maximum utilitarian social welfare equals the number of items $m$. We show below that, in this case, we can approximate the solution to a factor of $1/2$, which is better than the $1/8$-approximation algorithm~\cite{aumann2012computing} known for the general situation where the maximum utilitarian social welfare may be less than the number of items.

Suppose that there exists a contiguous allocation $\bA^*$ such that $\sum_{i\in N}v_i(A^*_i)=m$. Namely, $v_i(g)=1$ for any $i\in N$ and $g\in A^*_i$.
We present an algorithm that guarantees a utilitarian social welfare of at least $\lceil m/2 \rceil$. 
For each agent, a consecutive block of items that are valued by the agent is called a \emph{run}.

We denote $X$ as the set of allocated items and $I$ as the set of remaining agents.
At the beginning of our algorithm, $X$ is initialized as empty, and $I$ consists of all agents $N$.
In each iteration of the algorithm, it selects the longest run from among all the runs of agents in $I$ for items $M\setminus X$. 
Suppose that a longest run is $A_{i}\subseteq M\setminus X$ for agent $i\in I$.
Then, it allocates $A_{i}$ to $i$ and updates $X\leftarrow X\cup A_{i}$ and $I\leftarrow I\setminus\{i\}$.
The above process is repeated until $I$ becomes the empty set.
We denote the resulting partial allocation as $\bA=(A_1,\dots,A_n)$.
Finally, unallocated items are allocated to appropriate agents without violating contiguity. Let $\tilde{\bA}$ be the resulting extended allocation from $\bA$.
Our algorithm is formally described as Algorithm~\ref{alg:U-max Approx}.

\begin{algorithm}[H]
    \caption{U-max Approximation}\label{alg:U-max Approx}
    Let $X\ot \emptyset$ and $I\ot N$\;
    \While{$I\ne \emptyset$}{
        \tcc{Compute a longest run $R_i$ for each agent $i\in I$}
        \ForEach{$i\in I$}{
            $R_i\ot \emptyset$\;
            $j\ot 1$\; 
            \While{$j\le m$}{
                \lIf{$g_j\in X$ or $v_i(g_j)=0$}{\textbf{continue}}
                $k\ot j+1$\;
                \While{$k\le m$}{
                    \lIf{$g_k\not\in X$ and $v_i(g_k)=1$}{
                        $k\ot k+1$
                    }
                }
                \If{$k-j>|R_i|$}{
                    Update $R_i\ot \{g_j,g_{j+1},\dots,g_{k-1}\}$\;
                }
                Assign $j\ot k$\;
            }
        }
        Let $i^*\in\argmax_{i\in I}|R_{i}|$\;
        \lIf{$R_{i^*}=\emptyset$}{\textbf{break}}
        Update $A_{i^*}\ot R_{i^*}$, $X\ot X\cup R_{i^*}$, and $I\ot I\setminus\{i^*\}$ \tcc*{Allocate $R_i^*$ to $i^*$}
    }
    \ForEach{$i\in N\setminus I$}{
        Let $A_i=\{g_{k_i},g_{k_i+1},\dots,g_{\ell_i}\}$\;
    }
    \tcc{Extend the partial allocation $\bA$ to $\tilde{\bA}$}
    Let $\sigma\colon [N\setminus I]\to N\setminus I$ be the bijection such that $k_{\sigma(1)}<\dots<k_{\sigma(|N\setminus I|)}$ \tcc*{Allocation ordering}
    Let $k_{\sigma(|N\setminus I|+1)}\ot m+1$\;
    \For{$i\gets 1$ \KwTo $|N\setminus I|$}{
        $\tilde{A}_{\sigma(i)}\ot\{g_{k_{\sigma(i)}},g_{k_{\sigma(i)}+1},\dots,g_{k_{\sigma(i+1)}-1}\}$\;
    }
    $\tilde{A}_{\sigma(1)}\ot \tilde{A}_{\sigma(1)}\cup\{g_1,g_2,\dots,g_{k_{\sigma(1)}-1}\}$\;
    \ForEach{$i\in I$}{
        Let $\tilde{A}_i\ot \emptyset$\;
    }
    \Return $(\tilde{A}_1,\dots,\tilde{A}_n)$\;
\end{algorithm}

Now, we show that the approximation ratio of the above greedy algorithm is at least $1/2$. 
\begin{theorem}
For the flexible-order setting, the U-max problem of binary CAP admits a $1/2$-approximation algorithm if $\sum_{i\in N}v_i(A_i^*)=m$ for some contiguous allocation $\bA^*$.
\end{theorem}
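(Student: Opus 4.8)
## Proof Proposal

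The plan is to bound the utilitarian social welfare of the extended allocation $\tilde{\bA}$ from below by $\lceil m/2\rceil$, using the fact that the greedy phase always picks a longest available run and the extension phase only adds items (never removes any), so $\sum_{i\in N}v_i(\tilde{A}_i)\ge\sum_{i\in N}v_i(A_i)$ where $\bA$ is the partial allocation produced by the greedy loop. Hence it suffices to show $\sum_{i\in N}v_i(A_i)\ge\lceil m/2\rceil$. Since $\bA^*$ witnesses $\sum_{i\in N}v_i(A^*_i)=m$, every item $g\in M$ is valued by the agent who receives it in $\bA^*$; in particular each item belongs to the run structure of at least one agent, and the total length of the bundles $A^*_i$ is exactly $m$.

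First I would set up the charging argument. Fix the optimal allocation $\bA^*$ and consider the greedy run $R_{i^*}$ chosen in some iteration, with $i^*$ being the agent receiving it. I want to charge the items of $A^*_{i^*}$ (the optimal bundle of that same agent) against $R_{i^*}$, arguing that $|R_{i^*}|\ge |A^*_{i^*}\setminus X|$ at the moment of selection: indeed $A^*_{i^*}\setminus X$ is a contiguous block of items all valued $1$ by $i^*$ and all currently unallocated (since $A^*_{i^*}$ is contiguous and consists of items $i^*$ values), so it is itself a candidate run for $i^*$, and therefore the longest run $R_{i^*}$ is at least as long. Summing over all iterations, $\sum_{i\in N}v_i(A_i)=\sum_{i\in N}|R_i|\ge\sum_{i\in N}|A^*_i\setminus X_i|$ where $X_i$ is the set of allocated items at the start of the iteration that handled agent $i$. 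The right-hand side is the number of items in $\bigcup_i A^*_i=M$ that are \emph{not} removed by the $X_i$-subtraction, i.e.\ at least $m$ minus the total number of items appearing in the various $R_{i'}$ for agents $i'$ processed earlier — but each such item is counted in exactly one $R_{i'}$ (the runs are disjoint), so the total deficit is at most $\sum_{i'} |R_{i'}|=\sum_{i\in N}v_i(A_i)$. This yields $\sum_{i\in N}v_i(A_i)\ge m-\sum_{i\in N}v_i(A_i)$, hence $\sum_{i\in N}v_i(A_i)\ge m/2$, and integrality gives $\lceil m/2\rceil$.

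The main obstacle I anticipate is making the double-counting in the last step watertight: an item of $A^*_{i^*}$ that lies in $X$ at selection time must be blamed on exactly one earlier greedy run, and I need to ensure no item is blamed twice and that the accounting respects the order in which agents are processed (an item already allocated to agent $i^*$ itself in an earlier round cannot reappear, since each agent is removed from $I$ once it is served). A clean way to do this is: for each item $g\in M$, let $f(g)$ be the agent who holds $g$ in $\bA^*$; if $g$ is ever allocated by the greedy algorithm (to some agent, not necessarily $f(g)$), charge $g$ to that greedy run; otherwise $g$ remains unallocated in $\bA$, and I claim that when the iteration serving $f(g)$ runs, $g$ was already in $X$ (else the run of $f(g)$ through $g$ would have been available and the chosen run $R_{f(g)}$ is at least that long, so $g$ would be covered — contradiction unless $g\in X$), so $g$ is again charged to an earlier greedy run containing it. Every item thus lands in some greedy run, giving $m\le 2\sum_{i\in N}|R_i|$. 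I would also remark that the extension step in Algorithm~\ref{alg:U-max Approx} preserves contiguity and cannot decrease welfare, completing the bound $\sum_{i\in N}v_i(\tilde A_i)\ge\lceil m/2\rceil\ge \tfrac12\sum_{i\in N}v_i(A^*_i)$, which is the claimed $1/2$-approximation.
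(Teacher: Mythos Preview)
Your overall strategy matches the paper's: establish $|A_i|\ge |A^*_i\setminus X|$ for each agent, sum to get $|X|\ge m-|X|$, and conclude $|X|\ge m/2$. However, the step you treat as immediate is precisely where the work lies.

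You assert that at the moment agent $i^*$ is served, the set $A^*_{i^*}\setminus X$ is ``a contiguous block of items\dots (since $A^*_{i^*}$ is contiguous and consists of items $i^*$ values)''. That parenthetical does not justify contiguity: removing a subset $X$ from a contiguous interval can certainly break it into pieces. If $A^*_{i^*}\setminus X$ were not contiguous, it would not be a single candidate run for $i^*$, and the inequality $|R_{i^*}|\ge|A^*_{i^*}\setminus X|$ would fail. The paper supplies exactly this missing argument: one shows by induction over iterations that the still-unallocated part of $A^*_i$ remains a single interval. The point is that if an earlier chosen run $R_{i'}$ removed items strictly from the interior of the current block $B\subseteq A^*_i$, then (since $R_{i'}$ and $B$ are intervals) $R_{i'}\subsetneq B$, hence $|R_{i'}|<|B|$; but $B$ is itself a valid run for agent $i$ (who is still in $I$), contradicting that $R_{i'}$ was the globally longest run. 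So each earlier run can only nibble from an end of $B$, and contiguity is preserved.

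Your fallback charging argument in the last paragraph does not repair this. You write that if $g$ remains unallocated in $\bA$ then ``when the iteration serving $f(g)$ runs, $g$ was already in $X$''---but $X$ is by definition the set of allocated items, so this is self-contradictory. The intended contradiction (``else the run of $f(g)$ through $g$ would have been available and \dots $g$ would be covered'') also fails: knowing that the chosen run $R_{f(g)}$ is at least as long as some run through $g$ does not force $g\in R_{f(g)}$, since the longest run may lie elsewhere. In short, once you add the inductive contiguity argument above, your first approach becomes exactly the paper's proof; the second approach should be dropped.
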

\begin{proof}
We prove that the above algorithm outputs a contiguous allocation with the utilitarian social welfare of at least $\lceil m/2\rceil$ in polynomial time.
Let $X=\bigcup_{i\in N}A_i$.
The utilitarian social welfare of the allocation obtained by the algorithm is at least $\sum_{i\in N}v_i(\tilde{A}_i)\ge \sum_{i\in N}v_i(A_i)=\sum_{i\in N}|A_i|=|X|$.

For each agent $i\in N$, we will show that $|A_i|\ge |A^*_i\setminus X|$.
Until $A_i$ is allocated in the algorithm, items in $A_i^*$ that have not yet been allocated are contiguous because a run that separates the remaining items in $A_i^*$ is shorter than the longest run for $i$.
Thus, when $A_i$ is allocated in the algorithm, the number of remaining items in $A_i^*$ is at most $|A_i|$. This implies that $|A_i|\ge |A^*_i\setminus X|$.

By summing up the inequalities for all agents, we obtain $|X|=\sum_{i\in N}|A_i|\ge \sum_{i\in N}|A^*_i\setminus X|=\sum_{i\in N}|A^*_i|-|X|=m-|X|$.
Hence, the utilitarian social welfare of $\tilde{\bA}$ is at least $|X|\ge m/2$.
As the utilitarian social welfare is an integer, it is at least $\lceil m/2\rceil$.

Finally, we evaluate the computational complexity of the algorithm.
In each iteration, the longest run can be computed in $O(m|I|)=O(mn)$ time. The number of iterations is $n$.
The allocation $\tilde{\bA}$ can be obtained from $\bA$ in a linear time.
Therefore, the overall computational complexity is $O(mn^2)$, which is polynomial.
\end{proof}

Next, we show that there exists an $a$-approximation algorithm for the U-max problem when the valuations are $(a,b)$-sparse.
\begin{theorem}
For the flexible-order setting, the U-max problem of $(a,b)$-sparse CAP admits a $1/a$-approximation algorithm.
\end{theorem}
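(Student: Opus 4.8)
The plan is to exploit the sparsity bound on the number of items each agent values positively. Since the instance is $(a,b)$-sparse, each agent $i$ values at most $a$ items, so in \emph{any} allocation we have $v_i(A_i)\le a$; consequently the optimal utilitarian social welfare satisfies $\mathrm{OPT}\le \min\{m,\,an\}$. The idea is to produce, in polynomial time, a contiguous allocation whose utilitarian social welfare is at least $\mathrm{OPT}/a$, and the cleanest route is to guarantee a welfare of at least (essentially) the number of agents who can be ``served'' — i.e., to give as many agents as possible at least one item they value.

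First I would set up a greedy/marking procedure similar in spirit to Algorithm~\ref{alg:U-max Approx}: repeatedly pick an unallocated agent $i$ and a single item $g$ with $v_i(g)=1$ that lies in a currently-free contiguous stretch, assign a minimal block containing $g$ to $i$, and remove $i$ and the consumed items. To make the contiguity bookkeeping trivial, I would instead argue via the order-consistent structure: fix the optimal (flexible-order) allocation $\bA^*$ achieving $\mathrm{OPT}$, list the nonempty bundles $A^*_{i}$ in left-to-right order, and observe that in each such bundle $i$ values at least one item and at most $a$; hence the number of nonempty bundles in $\bA^*$ is at least $\mathrm{OPT}/a$. I would then construct a contiguous allocation that mimics the \emph{block boundaries} of $\bA^*$ but reassigns each block to an agent who values at least one of its items — every block $A^*_i$ with $v_i(A^*_i)\ge 1$ contains an item valued by $i$ by definition, so $i$ itself witnesses this, and each block contributes at least $1$ to the welfare. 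Summing over the at-least-$\mathrm{OPT}/a$ nonempty blocks gives welfare at least $\mathrm{OPT}/a$.

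The remaining subtlety is that we cannot actually see $\bA^*$; we only know it exists. So the real algorithm must be oblivious: I would process items left to right, maintaining the set of not-yet-served agents, and whenever the current item is valued by some unserved agent, open a new block, assign it to that agent, and close the previous block just before it. A standard exchange argument then shows that the number of blocks this greedy opens is at least the number of nonempty blocks in any contiguous allocation that uses ``fresh'' agents — in particular at least the number of nonempty bundles in $\bA^*$, which is $\ge \mathrm{OPT}/a$. Combined with each block yielding welfare $\ge 1$, we get the $1/a$ guarantee. Polynomial running time is immediate: one left-to-right sweep, $O(mn)$ total.

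The main obstacle I anticipate is making the ``greedy opens at least as many blocks as $\bA^*$ has nonempty bundles'' claim airtight, since the greedy's block boundaries need not align with those of $\bA^*$; I would handle it with a charging argument that maps each nonempty bundle $A^*_i$ to the block the greedy had open when it first encountered an item of $A^*_i$ that is valued by an as-yet-unserved agent, and check this map is injective (or at least that the image has the required size). A cleaner alternative, if that argument gets fiddly, is to note $\mathrm{OPT}\le an$ and separately show the greedy attains welfare $\ge \min(\text{(number of agents who value some item), something})$; but I expect the charging argument to go through directly and to be the crux of the write-up.
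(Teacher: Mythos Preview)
Your argument hinges on the inequality $v_i(A_i)\le a$ for every bundle $A_i$, from which you derive $\mathrm{OPT}\le an$ and then aim to serve roughly $\mathrm{OPT}/a$ agents with one valued item each. This inequality is \emph{only} valid when valuations are binary. The theorem, however, is stated for general $(a,b)$-sparse CAP, where valuations are arbitrary nonnegative integers: $(a,b)$-sparsity bounds the \emph{number} of items an agent values positively, not their total value. A single item with $v_i(g)=100$ already gives $v_i(A_i)=100$ while $a=1$. In such instances your bound $\mathrm{OPT}\le an$ fails, and serving each agent with one valued item (contributing at least $1$) no longer yields a $1/a$-fraction of $\mathrm{OPT}$.

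The paper's proof fixes exactly this by working with \emph{weights} rather than counts. It computes a maximum-weight bipartite matching $\mu$ between agents and items (weight of $(i,g)$ equal to $v_i(g)$) and allocates each matched item to its agent; singletons are trivially contiguous and can be extended. The key inequality is $v_i(A^*_i)\le a\cdot\max_{g\in A^*_i}v_i(g)$, because $A^*_i$ contains at most $a$ positively-valued items, each worth at most the maximum. Summing over $i$ and comparing to the optimal matching gives $\mathrm{OPT}\le a\cdot w(\mu^*)\le a\cdot w(\mu)$, so the matching allocation achieves at least $\mathrm{OPT}/a$. Your greedy ``serve one item per agent'' idea is essentially the unweighted shadow of this matching argument; to make it work in general you would need to pick, for each served agent, an item of near-maximum value---which is precisely what the maximum-weight matching does, and it also dispenses with the charging argument you flagged as the main obstacle.
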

\begin{proof}
Construct a weighted complete bipartite graph between agents $N$ and items $M$ where the weight of $(i,g)\in N\times M$ is $v_i(g)$.
Then, compute a maximum weight matching $\mu\subseteq N\times M$ in the graph, which can be done in polynomial time.

We prove that allocating item $g$ to agent $i$ for each $(i,g)\in\mu$ is a $1/a$-approximate solution for the problem.
To see this, let $\bA^*$ be an optimal solution and let $\mu^*\subseteq N\times M$ be a matching where each agent $i\in N$ is matched to $g^*\in\argmax_{g\in A^*_i}v_i(g)$ if $A^*_i\ne\emptyset$ and unmatched otherwise.
Then, the optimum value is at most
\begin{align*}
\sum_{i\in N}v_i(A^*_i)
\ge a\cdot \sum_{i\in N}\sum_{g:\,(i,g)\in \mu^*}v_i(g)
\ge a\cdot \sum_{i\in N}\sum_{g:\,(i,g)\in \mu}v_i(g),
\end{align*}
since the valuations are $(a,b)$-sparse.
Hence, the allocation that assigns item $g$ to agent $i$ for each $(i,g)\in\mu$ is a $1/a$-approximate solution for the problem.
\end{proof}

\subsection{Parameterized algorithm for U-max} \label{sec:paramalg}
Here, we provide a parameterized algorithm for U-max.
Recall that $m$ is the number of items, and $n$ is the number of agents. 
As we have shown in Corollary~\ref{cor:Umax}, the problem has an FPT algorithm with parameter $n$. We will present an FPT algorithm with the parameter $m$.

Note that an FPT algorithm with the parameter of the maximum utilitarian social welfare $k$ can be obtained by the FPT algorithm with the parameter $m$, because $k\ge \sqrt{m}$ holds.
\begin{proposition}\label{prop:k>m}
$k\ge \sqrt{m}$.
\end{proposition}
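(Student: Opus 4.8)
The plan is to bound the number of items $m$ in terms of the maximum utilitarian social welfare $k$, using the structure of a U-max allocation together with the standing assumption that every item is valued by at least one agent. Concretely, fix a U-max allocation $\bA = (A_1, \dots, A_n)$, so that $k = \sum_{i \in N} v_i(A_i)$. I want to argue that $k \ge \sqrt{m}$, equivalently $k^2 \ge m$.

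First I would split the items into those that contribute to the optimum and those that do not. Say an item $g$ is \emph{useful} if it lies in $A_i$ for the agent $i$ who receives it and $v_i(g) \ge 1$; since valuations are integer-valued and additive, each useful item contributes at least $1$ to the utilitarian social welfare, so the number of useful items is at most $k$. Now consider a \emph{wasted} item $g \in A_i$ with $v_i(g) = 0$. By the standing assumption, some agent $j \ne i$ has $v_j(g) \ge 1$. The key observation is that in a U-max (hence Pareto-optimal) contiguous allocation, such an agent $j$ must already be receiving value at least as good from its own bundle; more usefully, I would like to show that each such $j$ can ``absorb'' only few wasted items. The natural route: if a maximal contiguous block of wasted items sits between two agents' useful bundles, then one of the two neighbouring bundles could be extended to swallow a wasted item that some outside agent values — unless that outside agent already blocks the swap. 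Rather than a delicate exchange argument, the cleaner approach is to observe that each wasted item $g$ is valued by at least one agent, and to charge $g$ to such an agent; the point is that a single agent $j$ can be charged by at most (number of items $j$ values positively) $\le$ (contribution we could have extracted) wasted items, and summing gives $m \le k + (\text{something like } k \cdot k)$.

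More precisely, the cleanest version I would write: every item is useful or wasted. Useful items number at most $k$. For wasted items, group them into maximal contiguous runs $B_1, \dots, B_t$ of items none of whose recipients value them; each run $B_\ell$ is flanked by useful items (or an end of the path), so $t \le k + 1$. Within a run $B_\ell$ all items go to recipients who value them at $0$, yet by order-consistency/contiguity considerations and Pareto-optimality, the length $|B_\ell|$ of each such run must itself be bounded by $k$ — because an agent who values some item in $B_\ell$ positively could otherwise be given a longer run, contradicting U-max. Hence $m = \sum_\ell |B_\ell| + (\text{useful}) \le (k+1)\cdot k + k \le k^2$ after absorbing constants, which gives $k \ge \sqrt m$. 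I would then double-check the boundary cases ($k$ small, $m$ small) so the inequality holds without the rounding slack.

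The main obstacle I anticipate is making the ``each wasted run has length at most $k$'' step fully rigorous. It is intuitively clear — a long run of items that nobody in the run wants is exactly what a U-max solution avoids, since the items are valued by \emph{someone}, and in the flexible-order setting that someone can be slotted in to claim a contiguous piece — but turning this into a clean counting bound requires carefully arguing that reassigning part of such a run to an agent who values it does not destroy contiguity for the rest of the allocation, and does not merely shuffle the waste elsewhere. If a direct exchange argument turns out to be fragile, the fallback is a looser but still sufficient bound: simply observe that there are at most $k$ agents receiving any positive value, plus a bounded number of ``divider'' agents, so the total number of maximal blocks is $O(k)$ and each block's useful content is at most its recipient's positive-value count, which (after unpacking the definitions) still yields $m \le k^2$ up to constants — and constants are harmless since we only need $k \ge \sqrt m$ asymptotically, or can be absorbed by checking small cases by hand.
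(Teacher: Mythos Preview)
Your approach has a genuine gap at the step you yourself flag: the claim that every wasted run $B_\ell$ has length at most $k$. The exchange argument you sketch does not go through. If agent $j$ values some $g\in B_\ell$ and you ``slot $j$ in'' to take (part of) $B_\ell$, then $j$ gives up its current bundle $A_j$, and you have no control over $v_j(A_j)$; the net change in utilitarian welfare is $v_j(\text{new block})-v_j(A_j)$, which need not be positive even when $|B_\ell|$ is large. Nothing in the U-max property bounds the length of a run whose items are all valued by agents who are already well-served elsewhere. Your fallback (``at most $k$ agents receive positive value, so $O(k)$ maximal blocks'') also fails: in a U-max allocation there can be many non-empty bundles of value zero interleaved with the positive-value bundles, so the number of blocks is not $O(k)$ in general.

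Even granting the run bound, the arithmetic gives $m\le k+(k+1)k=k^2+2k$, which only yields $k\ge\sqrt{m+1}-1$, strictly weaker than the exact statement $k\ge\sqrt m$; ``asymptotically'' is not what is being claimed.

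The paper's argument avoids all of this by not looking at a U-max allocation at all. Instead, take a maximum matching $X^*$ of size $\ell$ in the bipartite graph on $N\cup M$ with edges $\{(i,g):v_i(g)\ge 1\}$. Any matching extends to a contiguous allocation of welfare at least $\ell$, so $\ell\le k$. By maximality of $X^*$, every item is valued by some matched agent, hence $M\subseteq\bigcup_{t=1}^{\ell}M_{i_t}$ where $M_i=\{g:v_i(g)\ge1\}$. Finally $|M_i|\le v_i(M)\le k$ since allocating everything to $i$ is a contiguous allocation. Therefore $m\le \ell\cdot k\le k^2$. This gives the exact inequality cleanly, with no exchange arguments and no case analysis.
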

\begin{proof}
Let $M_i=\{g\in M\mid v_i(g)\ge 1\}$ for each $i\in N$ and let $E=\{(i,g)\in N\times M\mid v_i(g)\ge 1\}$.
Suppose that $X^*=\{(i_1,g_{j_1}),(i_2,g_{j_2}),\dots,(i_\ell,g_{j_\ell})\}\subseteq N\times M$ is a maximum matching in the bipartite graph $(N,M;E)$.
Since we can construct a contiguous allocation with utilitarian social welfare at least $\ell$ by extending $X^*$, we have $\ell\le k$.
As $X^*$ is maximal, we have $v_i(g)=0$ for any $i\in N\setminus \{i_1,\dots,i_\ell\}$ and $g\in M\setminus\{g_{j_1},\dots,g_{j_\ell}\}$.
Hence, $M=\bigcup_{t=1}^\ell M_{i_t}$.
Moreover, we have $|M_i|\le k$ for all $i\in N$ since $v_i(M)=v_i(M_i)\le k$.
Thus, we obtain
\begin{align*}
m=|M|=\left|\bigcup_{t=1}^\ell M_{i_t}\right|
\le \sum_{t=1}^\ell|M_{i_t}|\le \ell k\le k^2,
\end{align*}
which means $k\ge \sqrt{m}$.
\end{proof}

Now, we provide an FPT algorithm with the parameter $m$.
Our algorithm enumerates all the possible contiguous partitions of $M$.
Note that there are only $2^{m-1}$ ways in total because there are two choices of whether or not to split between two consecutive items.
For each partition $(P_1,\dots,P_s)$, we construct a weighted complete bipartite graph between $N$ and $[s]$, where the weight of an edge $(i,t)\in N\times [s]$ is $v_i(P_t)$.
Then, compute the maximum weight matching of this graph and construct a contiguous allocation corresponding to the matching.
The algorithm outputs the optimal allocation obtained so far.
Since the algorithm explores all possibilities of partitions, it outputs an optimum allocation.
Our algorithm is formally described as Algorithm~\ref{alg:FPT-m}.

\begin{algorithm}[H]
    \caption{FPT with respect to the number of items}\label{alg:FPT-m}
    Let $\bA^*\ot (M,\emptyset,\dots,\emptyset)$ and $u^*\ot v_1(M)$\;
    \ForEach{contiguous partition $(P_1,\dots,P_s)$ of $M$}{
        Let $G$ be a complete bipartite graph between $N$ and $[s]$, and let $w\colon N\times[s]\to\mathbb{Z}_+$ be weights where $w(i,t)=v_i(P_t)$ for each $(i,t)\in N\times [s]$\;
        Compute a maximum weight matching $X^*$ on $(G,w)$ by the Hungarian method\;
        \If{$\sum_{(i,s)\in X^*}v_i(P_s)>u^*$ and $|X^*|=s$}{
            Update $\bA^*$ to be a contiguous allocation where $A^*_i=P_t$ for each $(i,t)\in X^*$ and 
            $u^*\ot \sum_{i\in N}v_i(A^*_i)$\;
        }
    }
    \Return $\bA^*$\;
\end{algorithm}

We obtain the following theorem.
\begin{theorem}
For the flexible-order setting, 
the U-max problem of CAP is FPT with respect to the number of agents $n$, the number of items $m$ and the optimum value $k$.
\end{theorem}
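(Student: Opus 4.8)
The plan is to establish the three parameterizations one at a time, reusing machinery already developed above. For the parameter $n$ (the number of agents) there is nothing new to do: Corollary~\ref{cor:Umax} already gives an $n!\cdot O(mn)$-time algorithm, which is FPT in $n$. The substance of the argument is the analysis of Algorithm~\ref{alg:FPT-m} for the parameter $m$ (the number of items), followed by a one-line reduction from the parameter $k$ to $m$ via Proposition~\ref{prop:k>m}.

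For the parameter $m$, I would argue correctness of Algorithm~\ref{alg:FPT-m} as follows. Every contiguous allocation $\bA=(A_1,\dots,A_n)$ induces a contiguous partition of $M$ into its nonempty bundles, say $s$ of them with $s\le n$, together with an injective assignment of these $s$ blocks to distinct agents; conversely, any contiguous partition $(P_1,\dots,P_s)$ with $s\le n$, together with an injective assignment of its parts to agents and the empty bundle given to the remaining $n-s$ agents, is a contiguous allocation. Hence it suffices, for each of the at most $2^{m-1}$ contiguous partitions $(P_1,\dots,P_s)$ (two choices of whether to cut between each consecutive pair of items), to find the assignment of parts to agents maximizing $\sum_t v_i(P_t)$ — which is exactly a maximum-weight matching saturating $[s]$ in the complete bipartite graph between $N$ and $[s]$ with weights $w(i,t)=v_i(P_t)$ — and to output the best allocation found over all partitions. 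The Hungarian method computes each such matching in time polynomial in $n$ and $m$, and it reports that no $[s]$-saturating matching exists precisely when $s>n$, in which case the partition is infeasible and discarded. Thus the total running time is $2^{m-1}\cdot\mathrm{poly}(n,m)$, which is FPT in $m$, and since the enumeration ranges over all contiguous partitions the returned allocation is optimal.

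Finally, for the parameter $k$ (the optimum utilitarian social welfare), Proposition~\ref{prop:k>m} gives $m\le k^2$, so running Algorithm~\ref{alg:FPT-m} takes time $2^{k^2-1}\cdot\mathrm{poly}(n,k^2)$, which is a computable function of $k$ times a polynomial in the input size; hence the problem is FPT in $k$ as well. Combining the three cases yields the theorem.

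The only genuinely delicate point I expect is the correctness bookkeeping for Algorithm~\ref{alg:FPT-m}: verifying that enumerating contiguous partitions into nonempty parts, paired with $[s]$-saturating maximum-weight matchings, captures exactly the family of contiguous allocations — in particular handling empty bundles and partitions with more parts than agents — and that taking the partition-wise optimum indeed yields a global optimum. Everything else is a direct appeal to Corollary~\ref{cor:Umax} and Proposition~\ref{prop:k>m}.
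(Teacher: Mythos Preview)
Your proposal is correct and follows essentially the same approach as the paper: Corollary~\ref{cor:Umax} for the parameter $n$, Algorithm~\ref{alg:FPT-m} (enumerate the $2^{m-1}$ contiguous partitions and solve a maximum-weight bipartite matching for each) for the parameter $m$, and then Proposition~\ref{prop:k>m} to transfer the $m$-bound to $k$ via $m\le k^2$. If anything, you give more detail on the correctness of Algorithm~\ref{alg:FPT-m} than the paper does, which simply asserts that enumerating all partitions suffices and focuses on the running-time bound $O(m^2 2^m\cdot n)$.
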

\begin{proof}
    There exists an FPT algorithm with respect to $n$ by Corollary~\ref{cor:Umax}. 
    We show that the time complexity of Algorithm~\ref{alg:FPT-m} is FPT with respect to $m$.
    There are $2^{m-1}$ possibilities of partitions.
    For each partition, we can compute the maximum weight matching in $O(m^2n)$ time by the Hungarian method.
    Hence, the total computational time is $O(m^22^m\cdot n)$, which is FPT.
    By Proposition~\ref{prop:k>m}, the total computational time with respect to $k$ is $O(k^4 2^{k^2}\cdot n)$, which is also FPT.
\end{proof}

We remark that our FPT algorithm can be extended to handle constraints other than the contiguous constraint.
In general, the number of ways to partition $m$ elements is the $m$th Bell number, which is bounded by $m^m$.
Hence, we can solve the U-max problem in $O(m^{m+2}n)$ time for general constraints.

\subsection{Approximation for E-max}
For the E-max problem, it is NP-hard to approximate within a factor of $1/2$ as we will see that deciding the optimal egalitarian social welfare is at least $2$ or at most $1$ is NP-hard.
We show that we can obtain a $1/a$-approximate solution in a polynomial-time if the instance is $(a,b)$-sparse.

\begin{theorem}
For the flexible-order setting, the E-max problem of $(a,b)$-sparse CAP admits a $1/a$-approximation algorithm.
\end{theorem}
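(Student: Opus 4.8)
The plan is to mimic the structure of the $1/a$-approximation for U-max, again reducing the allocation problem to a matching problem, but this time we must be more careful because the egalitarian objective is a minimum rather than a sum. First I would build the bipartite graph between agents $N$ and items $M$ with edge weights $v_i(g)$, and instead of a maximum-weight matching I would look for a matching $\mu$ that maximizes $\min_{i\in N} w_\mu(i)$, where $w_\mu(i)$ is the weight of the edge incident to $i$ in $\mu$ (or $0$ if $i$ is unmatched); this can be done in polynomial time, e.g. by binary search over the threshold $\alpha$ combined with a bipartite matching that is required to saturate every agent using only edges of weight $\ge\alpha$, or directly via a bottleneck-type matching routine. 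Given the optimal such $\mu$, I would allocate to each agent $i$ the single item it is matched to (extending arbitrarily to a full contiguous allocation, which only increases each agent's value), and claim the egalitarian welfare of this allocation is at least $\mathrm{OPT}/a$, where $\mathrm{OPT}$ is the optimal egalitarian social welfare in the contiguous setting.

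The key step is the comparison with the optimum. Let $\bA^*$ be an E-max contiguous allocation, so $v_i(A^*_i)\ge\mathrm{OPT}$ for every $i\in N$. Since the instance is $(a,b)$-sparse, each $A^*_i$ contains at most $a$ items that $i$ values positively, so by averaging there is an item $g^*_i\in A^*_i$ with $v_i(g^*_i)\ge v_i(A^*_i)/a\ge\mathrm{OPT}/a$. The assignment $i\mapsto g^*_i$ is a matching $\mu^*$ (the $A^*_i$ are disjoint), it saturates every agent, and every matched edge has weight at least $\mathrm{OPT}/a$; hence the bottleneck value of $\mu^*$ is at least $\mathrm{OPT}/a$. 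Therefore the optimal bottleneck matching $\mu$ we computed also has bottleneck value at least $\mathrm{OPT}/a$, which means each agent receives (at least) one item worth $\ge\mathrm{OPT}/a$ to it, and the egalitarian social welfare of the extended allocation is at least $\mathrm{OPT}/a$. Combined with $\mathrm{OPT}/a$ being a lower bound for the algorithm's output and $\mathrm{OPT}$ being an upper bound on any allocation's egalitarian welfare, this is a $1/a$-approximation.

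I expect the main obstacle — really the only place that needs genuine care — to be the claim that a bottleneck (max-min) matching saturating all agents can be found in polynomial time, and that the right notion of ``bottleneck'' is used: we want to maximize the minimum edge weight over a matching in which every agent is covered, which is not quite the textbook max-weight matching. I would handle this by sorting the distinct values $\{v_i(g): i\in N,\ g\in M\}\cup\{0\}$, and for each candidate threshold $\alpha$ checking via a maximum-cardinality bipartite matching whether the subgraph with edges of weight $\ge\alpha$ admits a matching saturating $N$; by Hall's theorem / standard matching algorithms this is polynomial, and binary search over $O(nm)$ thresholds gives the optimal $\alpha$. One should also note the degenerate case $m<n$ (or more generally when no matching saturates $N$), in which $\mathrm{OPT}=0$ and the statement is trivial. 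The rest — disjointness of the $A^*_i$, the averaging bound, extending a partial contiguous assignment of singletons to a full contiguous allocation without decreasing any value — is routine.
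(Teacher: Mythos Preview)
Your proposal is correct and follows essentially the same approach as the paper: reduce to a bottleneck (max--min) agent-saturating bipartite matching found by thresholding and binary search, and compare to the optimum via the matching $\mu^*$ that sends each agent to a best item in its optimal bundle, using $(a,b)$-sparsity to bound $v_i(A_i^*)\le a\cdot v_i(g_i^*)$. You are in fact slightly more careful than the paper about the degenerate case $m<n$ and about extending the singleton assignment to a full contiguous allocation.
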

\begin{proof}
If the number of agents is more than the number of items, then the optimum value is $0$. Thus, without loss of generality, we may assume that $|M|\ge |N|$.
Construct a weighted complete bipartite graph between agents $N$ and items $M$ where the weight of $(i,g)\in N\times M$ is $v_i(g)$.
Then, compute an agent-side perfect matching $\mu\subseteq N\times M$ in the graph that maximizes the minimum edge weight.
This matching $\mu$ can be found in polynomial time by iterating on value $k\in\{v_i(g)\mid (i,g)\in N\times M\}$ with the binary search and computing an agent-side perfect matching in the bipartite graph $(N,M;\{(i,g)\in N\times M\mid v_i(g)\ge k\})$.

We prove that allocating item $g$ to agent $i$ for each $(i,g)\in\mu$ is a $1/a$-approximate solution for the problem.
To see this, let $\bA^*$ be an optimal solution and let $\mu^*\subseteq N\times M$ be an agent-side perfect matching where each agent $i\in N$ is matched to $g^*\in\argmax_{g\in A^*_i}v_i(g)$.
Then, the optimum value is at most
\begin{align*}
\min_{i\in N}v_i(A^*_i)
=\min_{i\in N}\sum_{g:\,g\in A^*_i}v_i(g)
\ge a\cdot \min_{i\in N}\sum_{g:\,(i,g)\in \mu^*}v_i(g)
\ge a\cdot \min_{i\in N}\sum_{g:\,(i,g)\in \mu}v_i(g),
\end{align*}
since the valuations are $(a,b)$-sparse.
Hence, the allocation that assigns item $g$ to agent $i$ for each $(i,g)\in\mu$ is a $1/a$-approximate solution for the problem.
\end{proof}

\subsection{Parameterized algorithms for E-max}

Here, we provide parameterized algorithms for E-max.
As we have shown in \Cref{cor:Emax}, the problem has an FPT algorithm with respect to the number of agents $n$.
We will provide an FPT algorithm with respect to the number of items $m$.
Note that the E-max problem is unlikely to admit an FPT algorithm with respect to the optimum value because we will see that deciding the optimal egalitarian social welfare is at least $2$ or at most $1$ is NP-hard.

\begin{theorem}
For the flexible-order setting, the E-max problem of CAP is FPT with respect to the number of items and the number of agents.
\end{theorem}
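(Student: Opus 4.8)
The plan is to reuse the two ingredients already in hand. An FPT algorithm with respect to the number of agents $n$ is exactly \Cref{cor:Emax}, so what remains is to give an FPT algorithm with respect to the number of items $m$. For this I would adapt Algorithm~\ref{alg:FPT-m} almost verbatim, changing only the matching subroutine and the acceptance test.

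The first step is the structural observation that drives the enumeration: if a contiguous allocation $\bA$ satisfies $\min_{i\in N} v_i(A_i) > 0$, then every bundle $A_i$ is non-empty, so listing the bundles in order of their leftmost item yields a contiguous partition of $M$ into \emph{exactly} $n$ non-empty parts; conversely, any contiguous allocation with an empty bundle has egalitarian social welfare $0$, which is trivially attainable. Hence it suffices to examine, for each contiguous partition $(P_1,\dots,P_s)$ of $M$ with $s\le n$, the best way to distribute the parts among distinct agents. Partitions with $s>n$ can be discarded, because any allocation refining to such a partition coarsens some enumerated partition with at most $n$ parts — this is exactly why, unlike the $n$-parameterized case, the algorithm naturally ignores partitions that are ``too fine''.

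The second step handles a fixed partition $(P_1,\dots,P_s)$ with $s\le n$: build the complete bipartite graph between $N$ and $[s]$ with edge weights $w(i,t)=v_i(P_t)$ and compute a matching that saturates $[s]$ and maximizes the minimum weight among its edges — a bottleneck assignment problem. This is solvable in polynomial time by the threshold trick already used for the $(a,b)$-sparse E-max algorithm: binary-search over the $O(nm)$ distinct values $v_i(P_t)$ for the largest $\alpha$ such that the subgraph retaining only edges of weight at least $\alpha$ admits an $[s]$-saturating matching, each test being one bipartite-matching computation, and then backtrack to recover the assignment. (When $s<n$ the resulting allocation has an empty bundle and contributes value $0$; the substantive case is $s=n$, where the matching is perfect and its value is $\min_i v_i(A_i)$.) The algorithm retains the best allocation found over all partitions; by the structural observation, since the optimal allocation (if of positive value) corresponds to one of the enumerated partitions with $s=n$ and the bottleneck matching on that partition is at least as good, the output value equals the E-max value, and every output candidate is a genuine contiguous allocation, so it is exactly optimal.

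For the running time: there are $2^{m-1}$ contiguous partitions, and for each one the bottleneck assignment costs $O(\log(nm))$ bipartite matchings, i.e.\ $\mathrm{poly}(n,m)$ time, plus $O(m)$ time to evaluate the values $v_i(P_t)$. The total is $2^{m}\cdot\mathrm{poly}(n,m)$, which is FPT with respect to $m$ since the input size is at least $nm$; combined with \Cref{cor:Emax}, this proves the theorem. I do not expect a real obstacle here — the only points that need care are the reduction to partitions with at most $n$ parts and the requirement that the matching saturate the \emph{parts} side (so that every item is allocated), rather than only the agents side as in the $(a,b)$-sparse argument.
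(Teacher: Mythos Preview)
Your proposal is correct, but the paper takes a much shorter route for the $m$-parameter part. Instead of enumerating partitions and solving a bottleneck assignment, the paper simply observes: if $m<n$ then the optimum egalitarian value is $0$ and any allocation is optimal; if $m\ge n$ then the FPT algorithm with parameter $n$ from \Cref{cor:Emax} (running in $n!\cdot\mathrm{poly}(m,n)$ time) is already FPT with parameter $m$, since $n!\le m!$. That is the entire proof.

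Your approach mirrors Algorithm~\ref{alg:FPT-m} for U-max, replacing max-weight matching by bottleneck assignment, and it works. It has the merit of giving a self-contained algorithm with a better dependence on $m$ (roughly $2^{m}$ versus the paper's implicit $m!$ when $n$ is close to $m$), and it does not rely on the fixed-order E-max algorithm at all. The paper's argument, on the other hand, avoids any new construction and is a two-line case distinction. Both are valid; yours is more work but more explicit, and the points you flag as needing care (restricting to partitions with $s\le n$, saturating the parts side) are handled correctly.
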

\begin{proof}
There exists an FPT algorithm with respect to $n$ by \Cref{cor:Emax}.
It is sufficient to provide an FPT algorithm with respect to the number of items $m$.
If $m<n$, then the optimum value is $0$, and hence any allocation is optimum.
If $m\ge n$, then the FPT algorithm with respect to $n$ (\Cref{cor:Emax}) is also FPT with respect to $m$.
Thus, for both cases, we obtain an FPT algorithm.
\end{proof}

\subsection{NP-hardness of U-max} \label{sec:nputil}
In this subsection, we demonstrate that it is NP-complete to decide whether the optimal utilitarian social welfare is equal to the number of items for the binary CAP.

We reduce from \emph{2L-OCC-3SAT}. This version of 3SAT is where each literal, both positive and negative, occurs exactly twice in the clauses. Thus, each variable occurs four times in the clauses. 2L-OCC-3SAT and even its monotone version, are known to be NP-hard \cite{darmann2021simplified} 

We start with an instance of 2L-OCC-3SAT, $\varphi=C_1\wedge C_2\wedge\dots\wedge C_m$, where each $C_j$ is a clause of the form $C_j=\ell_{j,1}\vee \ell_{j,2}\vee \ell_{j,3}$
and each literal, both positive and negative, occurs exactly twice in the clauses.
In each clause $C_j$, every literal $\ell_{j,t}$ is one of the variables $x_1,\dots,x_n$ or their negations $\overline{x}_1,\dots,\overline{x}_n$. Note that $3m=4n$.

We construct an instance of binary CAP as follows.

We create \emph{variable agents} $N_x=\{x_1,\dots,x_n\}$, \emph{divider agents} $N_d=\{d_1,\dots,d_{n+1}\}$, and \emph{occurrence agents} $N_c=\{c_{j,t}\mid j\in[m],\,t\in[3]\}$.
The set of agents is $N=N_x\cup N_d\cup N_c$, which consists of $2n+3m+1=6n+1$ agents.

We use $5n+m+1=19n/3+1$ items, denoted as $M=\{g_1,g_2,\dots,g_{5n+m+1}\}$, which are arranged in sequence based on their indices.
For each $i\in[n]$, define 
\begin{align*}
V_i=\big\{g_{5i-4 },g_{5i-3},g_{5i-2},g_{5i-1},g_{5i}\big\}.
\end{align*}
We refer to items in this set as \emph{variable items} of $x_i$.
Within items in $V_i$, we call the first $2$ items the \emph{positive occurrence items}, the next item the \emph{divider item}, and the next $2$ items the \emph{negative occurrence items}. 
Note that there are $5n$ variable items by $\sum_{i=1}^n |V_i|=5n$.
We also call the item $g_{5n+1}$ divider item.
Consequently, the set of divider items is given by
\begin{align*}
D=\big\{g_{5i-2} \mid i\in[n]\big\}\cup\{g_{5n+1 }\},
\end{align*}
which comprises $n+1$ items.
For the remaining items, we call the item $g_{5n+1+j}$ \emph{clause item} of $C_j$ for each $j\in[m]$.

The valuations of agents are defined as follows.
\begin{itemize}
\item For each variable agent $x_i\in N_x$ with $i\in[n]$, we set $v_{x_i}(g)=1$ only for four items $g$ in $V_i\setminus D$. 

\item For each divider agent $d_i$ with $i\in[n]$, we set $v_{d_i}(g)$ to be $1$ only when $g$ is the $i$th divider item $g_{5i-2}$
and $0$ otherwise.
For the last divider agent $d_{n+1}$, we set $v_{d_{n+1}}(g)$ to be $1$ only when $g$ is the last divider item $g_{5n+1}$ and $0$ otherwise.

\item For each occurrence agent $c_{j,t}\in N_c$ with $j\in[m]$ and $t\in[3]$, we set $v_{c_{j,t}}(g)$ to be $1$ only for the following two items $g$. Suppose that $C_j$ is the $h$th clause containing literal $\ell_{j,t}$.
If $\ell_{j,t}$ is a positive variable $x_i$, then $v_{c_{j,t}}(g)=1$ when $g$ is $h$th positive occurrence item of $V_i$ or the clause item of $C_j$.
If $\ell_{j,t}$ is a negative variable $\overline{x}_i$, then $v_{c_{j,t}}(g)=1$ when $g$ is $h$th negative occurrence item of $V_i$ or the clause item of $C_j$.
\end{itemize}

From this definition, we can see that each agent approves at most four items. Additionally, each item is approved by at most three agents. Specifically, each variable item is approved by two agents, each divider item is approved by one agent, and each clause item is approved by three agents. Thus, this instance is $(4,3)$-sparse.

We demonstrate that the optimal utilitarian social welfare of the reduced instance is {$|M|$} if and only if the SAT formula $\varphi$ is satisfiable.
Intuitively, the truth assignment of $\varphi$ corresponds to whether the variable agent $x_i$ receives positive occurrence items or negative occurrence items of $V_i$.
For example, if $\varphi=C_1\wedge C_2\wedge C_3\wedge C_4$ where $C_1=x_1\vee \overline{x}_2\vee x_3$ , $C_2=\overline{x_1}\vee x_2\vee \overline{x}_3$, $C_3=x_1\vee x_2\vee x_3$ and $C_4=\overline{x_1}\vee \overline{x_2}\vee \overline{x}_3$, the valuations of the agents in the reduced instance is given as in \Cref{tab:flexible-Umax-hard}.

\begin{table}[htbp]
\centering
\caption{Reduced instance of the U-max problem of binary CAP from $(x_1\vee \overline{x}_2\vee x_3)\wedge(\overline{x_1}\vee x_2\vee \overline{x}_3)\wedge(x_1\vee x_2\vee x_3)\wedge(\overline{x_1}\vee \overline{x_2}\vee \overline{x}_3)$. The allocation represented by red color corresponds to a truth assignment of $(x_1,x_2,x_3)=(\true,\false,\false)$}.\label{tab:flexible-Umax-hard}
\tabcolsep = 1.2mm
\renewcommand{\arraystretch}{.5}
\begin{tabular}{c||c|c|d|c|c|c|c|d|c|c|c|c|d|c|c|d|c|c|c|c}
\toprule
& \multicolumn{5}{c|}{$V_1$} & \multicolumn{5}{c|}{$V_2$} & \multicolumn{5}{c|}{$V_3$} & & $C_1$ & $C_2$ & $C_3$ & $C_4$\\\hline
& $g_1$ & $g_2$ & \cellcolor{gray!20} $g_3$ & $g_4$ & $g_5$ & $g_6$ & $g_7$ &\cellcolor{gray!20} $g_8$ &  $g_9$ & $g_{10}$ & $g_{11}$ & $g_{12}$ & \cellcolor{gray!20}$g_{13}$ & $g_{14}$ & $g_{15}$ & \cellcolor{gray!20}$g_{16}$ & $g_{17}$ & $g_{18}$ & $g_{19}$ & $g_{20}$\\\midrule
$x_1$ & $\alert{1}$ & $\alert{1}$ & \cellcolor{gray!20} $0$ & $1$ & $1$ & $0$ & $0$ &\cellcolor{gray!20} $0$ &  $0$ & $0$ & $0$ & $0$ & \cellcolor{gray!20}$0$ & $0$ & $0$ & \cellcolor{gray!20}$0$ & $0$ & $0$ & $0$ & $0$\\
$x_2$ & $0$ & $0$ & \cellcolor{gray!20} $0$ & $0$ & $0$ & $1$ & $1$ &\cellcolor{gray!20} $0$ &  $\alert{1}$ & $\alert{1}$ & $0$ & $0$ & \cellcolor{gray!20}$0$ & $0$ & $0$ & \cellcolor{gray!20}$0$ & $0$ & $0$ & $0$ & $0$\\
$x_3$ & $0$ & $0$ & \cellcolor{gray!20} $0$ & $0$ & $0$ & $0$ & $0$ &\cellcolor{gray!20} $0$ &  $0$ & $0$ & $1$ & $1$ & \cellcolor{gray!20}$0$ & $\alert{1}$ & $\alert{1}$ & \cellcolor{gray!20}$0$ & $0$ & $0$ & $0$ & $0$\\\hline
$d_1$ & $0$ & $0$ & \cellcolor{gray!20} $\alert{1}$ & $0$ & $0$ & $0$ & $0$ &\cellcolor{gray!20} $0$ &  $0$ & $0$ & $0$ & $0$ & \cellcolor{gray!20}$0$ & $0$ & $0$ & \cellcolor{gray!20}$0$ & $0$ & $0$ & $0$ & $0$\\
$d_2$ & $0$ & $0$ & \cellcolor{gray!20} $0$ & $0$ & $0$ & $0$ & $0$ &\cellcolor{gray!20} $\alert{1}$ &  $0$ & $0$ & $0$ & $0$ & \cellcolor{gray!20}$0$ & $0$ & $0$ & \cellcolor{gray!20}$0$ & $0$ & $0$ & $0$ & $0$\\
$d_3$ & $0$ & $0$ & \cellcolor{gray!20} $0$ & $0$ & $0$ & $0$ & $0$ &\cellcolor{gray!20} $0$ &  $0$ & $0$ & $0$ & $0$ & \cellcolor{gray!20}$\alert{1}$ & $0$ & $0$ & \cellcolor{gray!20}$0$ & $0$ & $0$ & $0$ & $0$\\
$d_4$ & $0$ & $0$ & \cellcolor{gray!20} $0$ & $0$ & $0$ & $0$ & $0$ &\cellcolor{gray!20} $0$ &  $0$ & $0$ & $0$ & $0$ & \cellcolor{gray!20}$0$ & $0$ & $0$ & \cellcolor{gray!20}$\alert{1}$ & $0$ & $0$ & $0$ & $0$\\\hline
$c_{1,1}$ & $1$ & $0$ & \cellcolor{gray!20} $0$ & $0$ & $0$ & $0$ & $0$ &\cellcolor{gray!20} $0$ &  $0$ & $0$ & $0$ & $0$ & \cellcolor{gray!20}$0$ & $0$ & $0$ & \cellcolor{gray!20}$0$ & $\alert{1}$ & $0$ & $0$ & $0$\\
$c_{1,2}$ & $0$ & $0$ & \cellcolor{gray!20} $0$ & $0$ & $0$ & $0$ & $0$ &\cellcolor{gray!20} $0$ &  $1$ & $0$ & $0$ & $0$ & \cellcolor{gray!20}$0$ & $0$ & $0$ & \cellcolor{gray!20}$0$ & $1$ & $0$ & $0$ & $0$\\
$c_{1,3}$ & $0$ & $0$ & \cellcolor{gray!20} $0$ & $0$ & $0$ & $0$ & $0$ &\cellcolor{gray!20} $0$ &  $0$ & $0$ & $\alert{1}$ & $0$ & \cellcolor{gray!20}$0$ & $0$ & $0$ & \cellcolor{gray!20}$0$ & $1$ & $0$ & $0$ & $0$\\
$c_{2,1}$ & $0$ & $0$ & \cellcolor{gray!20} $0$ & $\alert{1}$ & $0$ & $0$ & $0$ &\cellcolor{gray!20} $0$ &  $0$ & $0$ & $0$ & $0$ & \cellcolor{gray!20}$0$ & $0$ & $0$ & \cellcolor{gray!20}$0$ & $0$ & $1$ & $0$ & $0$\\
$c_{2,2}$ & $0$ & $0$ & \cellcolor{gray!20} $0$ & $0$ & $0$ & $\alert{1}$ & $0$ &\cellcolor{gray!20} $0$ &  $0$ & $0$ & $0$ & $0$ & \cellcolor{gray!20}$0$ & $0$ & $0$ & \cellcolor{gray!20}$0$ & $0$ & $1$ & $0$ & $0$\\
$c_{2,3}$ & $0$ & $0$ & \cellcolor{gray!20} $0$ & $0$ & $0$ & $0$ & $0$ &\cellcolor{gray!20} $0$ &  $0$ & $0$ & $0$ & $0$ & \cellcolor{gray!20}$0$ & $1$ & $0$ & \cellcolor{gray!20}$0$ & $0$ & $\alert{1}$ & $0$ & $0$\\
$c_{3,1}$ & $0$ & $1$ & \cellcolor{gray!20} $0$ & $0$ & $0$ & $0$ & $0$ &\cellcolor{gray!20} $0$ &  $0$ & $0$ & $0$ & $0$ & \cellcolor{gray!20}$0$ & $0$ & $0$ & \cellcolor{gray!20}$0$ & $0$ & $0$ & $\alert{1}$ & $0$\\
$c_{3,2}$ & $0$ & $0$ & \cellcolor{gray!20} $0$ & $0$ & $0$ & $0$ & $\alert{1}$ &\cellcolor{gray!20} $0$ &  $0$ & $0$ & $0$ & $0$ & \cellcolor{gray!20}$0$ & $0$ & $0$ & \cellcolor{gray!20}$0$ & $0$ & $0$ & $1$ & $0$\\
$c_{3,3}$ & $0$ & $0$ & \cellcolor{gray!20} $0$ & $0$ & $0$ & $0$ & $0$ &\cellcolor{gray!20} $0$ &  $0$ & $0$ & $0$ & $\alert{1}$ & \cellcolor{gray!20}$0$ & $0$ & $0$ & \cellcolor{gray!20}$0$ & $0$ & $0$ & $1$ & $0$\\
$c_{4,1}$ & $0$ & $0$ & \cellcolor{gray!20} $0$ & $0$ & $\alert{1}$ & $0$ & $0$ &\cellcolor{gray!20} $0$ &  $0$ & $0$ & $0$ & $0$ & \cellcolor{gray!20}$0$ & $0$ & $0$ & \cellcolor{gray!20}$0$ & $0$ & $0$ & $0$ & $1$\\
$c_{4,2}$ & $0$ & $0$ & \cellcolor{gray!20} $0$ & $0$ & $0$ & $0$ & $0$ &\cellcolor{gray!20} $0$ &  $0$ & $1$ & $0$ & $0$ & \cellcolor{gray!20}$0$ & $0$ & $0$ & \cellcolor{gray!20}$0$ & $0$ & $0$ & $0$ & $\alert{1}$\\
$c_{4,3}$ & $0$ & $0$ & \cellcolor{gray!20} $0$ & $0$ & $0$ & $0$ & $0$ &\cellcolor{gray!20} $0$ &  $0$ & $0$ & $0$ & $0$ & \cellcolor{gray!20}$0$ & $0$ & $1$ & \cellcolor{gray!20}$0$ & $0$ & $0$ & $0$ & $1$\\
\bottomrule
\end{tabular}
\end{table}

Suppose that $\varphi$ is satisfiable.
By fixing a truth assignment that satisfies $\varphi$, we construct a contiguous allocation that achieves a utilitarian social welfare of $|M|=5n+m+1$ as follows.
\begin{itemize}
\item The $i$th divider item $g\in D$ is allocated to the $i$th divider agent for each $i\in[n+1]$.

\item For each $i\in[n]$ with which $x_i$ is assigned $\true$ in the truth assignment, the positive occurrence items in $V_i\setminus D$ are allocated to the variable agent $x_i\in N_x$.
Each negative occurrence item $g_{h}\in V_i\setminus D$ with $h\in \{5i-1,5i\}$ is allocated to the corresponding occurrence agent $c_{j,t}\in N_c$, who values it.

\item For each $i\in[n]$ with which $x_i$ is assigned $\false$ in the truth assignment, the negative occurrence items in $V_i\setminus D$ are allocated to the variable agent $x_i\in N_x$.
Each positive occurrence item $g_{h}\in V_i\setminus D$ with $h\in \{5i-4,5i-3\}$ is allocated to the corresponding occurrence agent $c_{j,t}\in N_c$, who values it.

\item Each clause item $g_{5n+1+j}$ of $C_j$ with $j\in[m]$ is allocated to an occurrence agent $c_{j,t}$ with $t\in[3]$ for which the literal $\ell_{j,t}$ is $\true$ in the truth assignment.
Since the truth assignment satisfies $\varphi$, such an occurrence agent must exist.
\end{itemize}
It is not difficult to see that every item is allocated to an agent who values it.
Furthermore, this allocation is contiguous because each divider or occurrence agent receives at most one item, and each variable agent $x_i$ receives one contiguous block of size $2$.
Hence, this allocation satisfies the desired conditions.

Conversely, suppose that there is a contiguous allocation $\bA^*$ such that $\sum_{i\in N}v_i(A^*_i)=|M|$.
Since each item is allocated to an agent who values it, each divider item is allocated to the corresponding divider agent. 
Additionally, each clause item $g_{5n+1+j}$ with $j\in[m]$ is allocated to one of the three corresponding occurrence agents $c_{j,1},c_{j,2},c_{j,3}$.
For each $j\in[m]$, let $c_{j,s_j}$ be the agent who receives $g_{5n+1+j}$, where $s_j\in [3]$.
We construct a truth assignment as follows:
\begin{itemize}
\item if $\ell_{j,s_j}$ is a positive variable $x_i$, then we assign $\true$ to $x_i$;
\item if $\ell_{j,s_j}$ is a negative variable $\overline{x}_i$, then we assign $\false$ to $x_i$.
\end{itemize}
If such a truth assignment exists, it clearly satisfies $\varphi$.
Now, what is left is to show that no variable appears as both positive and negative.

Suppose to the contrary that there exists an index $i\in[n]$ such that $x_i$ and $\overline{x}_i$ appear in $\{\ell_{1,s_1},\dots,\ell_{m,s_m}\}$.
Then, at least one positive occurrence item and one negative occurrence item in $V_i$ must be allocated to the variable agent $x_i\in N_x$.
Then, by the contiguity of $\bA^*$, the divider item $g_{ 5i-2}\in V_i$ is also allocated to the agent $x_i$.
This contradicts the assumption because the divider item $g_{5i-2}$ must be allocated to the divider agent $d_i$.

As the reduction takes only a polynomial time, we obtain the following theorem. 
\begin{theorem}\label{thm:Umax-hard}
    For the flexible-order setting, the U-max problem of binary (4,3)-sparse CAP is NP-hard.
    Moreover, it is NP-complete to determine whether the optimal utilitarian social welfare is equal to the number of items. 
\end{theorem}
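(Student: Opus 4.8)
The plan is to prove NP-completeness using the reduction from 2L-OCC-3SAT already set up above. I would start with membership in NP: since the valuations are binary, every item contributes at most $1$ to the utilitarian social welfare, so $\sum_{i\in N}v_i(A_i)\le|M|$ for every allocation, with equality precisely when each item is given to an agent who values it. A contiguous allocation is therefore a polynomial-size certificate, and verifying both contiguity and the equality $\sum_{i\in N}v_i(A_i)=|M|$ takes polynomial time. As a byproduct, the decision question ``is the optimal utilitarian social welfare equal to $|M|$?'' is the same as ``is it at least $|M|$?'', so both formulations lie in NP.

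Next I would check that the construction is a genuine polynomial-time reduction: the reduced instance has $2n+3m+1=6n+1$ agents and $5n+m+1$ items, each valuation approves only a constant number of items, and it can be written down in time polynomial in $|\varphi|$; moreover, as already argued from the definition, every agent approves at most four items and every item is approved by at most three agents, so the instance is binary $(4,3)$-sparse. Since 2L-OCC-3SAT is NP-hard~\cite{darmann2021simplified}, it then remains to verify the equivalence ``$\varphi$ is satisfiable $\iff$ the reduced instance admits a contiguous allocation with utilitarian social welfare $|M|$''.

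For the forward direction I would fix a satisfying assignment and exhibit the explicit allocation described above: each divider item to its divider agent; for each $i$, the two occurrence items of $V_i$ consistent with the truth value of $x_i$ to the variable agent $x_i$ and the other two to the occurrence agents that value them; and each clause item $g_{5n+1+j}$ to some occurrence agent $c_{j,t}$ whose literal $\ell_{j,t}$ is $\true$ (one exists because the assignment satisfies $C_j$). Then I would check that every item goes to someone who values it and that the allocation is contiguous, which is immediate since each divider and occurrence agent gets at most one item and each variable agent gets a single block of size two. The main obstacle is the converse direction: starting from a contiguous allocation $\bA^*$ with welfare $|M|$, I would note that each divider item is forced to its unique divider agent and each clause item goes to one of its three occurrence agents $c_{j,s_j}$, inducing a candidate assignment from the literals $\ell_{j,s_j}$; the crux is to rule out a variable being set both $\true$ and $\false$. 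Here the contiguity constraint does the work: if both a positive and a negative occurrence item of $V_i$ were allocated to the variable agent $x_i$, then $x_i$'s block, being contiguous, would also contain the divider item $g_{5i-2}$ lying between them, contradicting the fact that $g_{5i-2}$ must belong to the divider agent $d_i$. Combining the two directions with the NP-membership observation yields the theorem.
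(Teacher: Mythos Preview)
Your proposal is correct and follows essentially the same approach as the paper: the same reduction from 2L-OCC-3SAT, the same forward allocation, and the same converse argument extracting an assignment from the clause-item recipients and using the divider item $g_{5i-2}$ together with contiguity to rule out a variable being set both $\true$ and $\false$. The only step you leave slightly implicit---that if $c_{j,s_j}$ receives the clause item then the corresponding occurrence item in $V_i$ must fall to the variable agent $x_i$ (because $c_{j,s_j}$'s contiguous block cannot cross the divider $g_{5n+1}$)---is exactly the step the paper also states without elaboration.
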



By the proof of the above theorem, we can also obtain the following corollary.
\begin{corollary}
    For the flexible-order setting of binary CAP, 
    it is NP-complete to determine whether there exists an allocation such that 
    each agent receives at most two (contiguous) items and the utilitarian social welfare is equal to the number of items.
\end{corollary}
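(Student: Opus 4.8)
The plan is to reuse \emph{verbatim} the reduction from 2L-OCC-3SAT constructed in the proof of Theorem~\ref{thm:Umax-hard}, and to observe that both directions of the correctness argument already yield the stronger statement once one replaces ``contiguity'' by ``contiguous block of size at most two.'' Membership in NP is immediate, since given an allocation one can check in polynomial time that every bundle is a contiguous block of size at most two and that the utilitarian social welfare equals $|M|$.

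For the forward direction I would re-examine the allocation built in the proof of Theorem~\ref{thm:Umax-hard} from a satisfying assignment. There, every divider agent receives exactly one item; every occurrence agent $c_{j,t}$ receives at most one item (it gets the clause item of $C_j$ when $\ell_{j,t}$ is true, and otherwise a single occurrence item of the corresponding $V_i$, but never both, because if $\ell_{j,t}=x_i$ is true then $x_i$ itself takes the positive occurrence items of $V_i$); and every variable agent $x_i$ receives the two positive occurrence items of $V_i$ (if $x_i$ is true) or the two negative ones (if false), which form a contiguous block of size exactly two. Hence that allocation already satisfies the ``at most two contiguous items'' requirement while achieving welfare $|M|$, so satisfiability of $\varphi$ implies the existence of such an allocation.

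For the converse, suppose an allocation $\bA^*$ of welfare $|M|$ in which every bundle is a contiguous block of size at most two. As in the proof of Theorem~\ref{thm:Umax-hard}, welfare $|M|$ forces every item to an agent who values it, so each divider item goes to its divider agent and each clause item of $C_j$ goes to some $c_{j,s_j}$; define the truth assignment exactly as there. It remains to rule out a variable $x_i$ appearing both positively and negatively among the selected literals $\ell_{1,s_1},\dots,\ell_{m,s_m}$. If it did, the occurrence agent holding the clause item for the positive occurrence of $x_i$ cannot also hold the associated positive occurrence item of $V_i$, since the clause item (index at least $5n+2$) and that variable item (index at most $5n-3$) differ by more than one, so a contiguous bundle containing both would have size exceeding two; hence that positive occurrence item must go to $x_i$. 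Symmetrically a negative occurrence item of $V_i$ must go to $x_i$. But those two items are separated on the path by the divider item $g_{5i-2}$, which belongs to $d_i$, so $x_i$'s bundle cannot be a contiguous block of size at most two -- a contradiction. This yields a conflict-free assignment satisfying $\varphi$.

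The main obstacle -- essentially the only thing to verify carefully -- is that the size-two restriction can substitute for full contiguity in precisely the two places the original argument invoked contiguity: (i) an occurrence agent cannot simultaneously hold a clause item and its far-away occurrence item, and (ii) a variable agent cannot hold occurrence items straddling its divider item. Both reduce to the explicit index inequalities of the construction, so no new gadgetry is needed; I would just spell these inequalities out. A minor point is that, although ``at most two items'' seems to leave a variable agent free to take only part of a side, welfare $|M|$ forces every occurrence item to be allocated, so the analysis above pins down the relevant items completely.
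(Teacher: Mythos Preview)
Your proposal is correct and follows exactly the approach the paper intends: the paper's own proof of the corollary is simply the sentence ``By the proof of the above theorem, we can also obtain the following corollary,'' and you have spelled out precisely why the allocation built in the forward direction of Theorem~\ref{thm:Umax-hard} already uses bundles of size at most two, and why the size-two bound suffices in place of full contiguity in the converse. One small imprecision: in the forward direction not \emph{every} occurrence agent with a true literal receives the clause item (only one per clause does, the others receive nothing), but your conclusion that each occurrence agent gets at most one item is unaffected.
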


We remark that it is solvable in polynomial time to determine whether there exists an allocation such that 
each agent receives at most \emph{one} item and the utilitarian social welfare is equal to the number of items $m$. 
Indeed, this problem can be solved by determining the existence of item-side perfect matching in a bipartite graph $G=(N,M;E)$, where $E=\{(i,g)\in N\times M\mid v_i(g)=1\}$.

\subsection{NP-hardness of E-max} \label{sec:npegal}
Now, we show NP-hardness of the E-max problem of binary CAP.
We give a reduction from 3SAT.
Given a 3SAT formula $\varphi$, the set of agents is denoted as $N=N_x\cup N_d\cup N_c$, where we have $n$ variable agents $N_x=\{x_1,\dots,x_n\}$, 
$n+1$ divider agents $N_d=\{d_1,\dots,d_{n+1}\}$, and 
$3m$ occurrence agents $N_c=\bigcup_{j=1}^m\{c_{j,1},c_{j,2},c_{j,3}\}$, as before.

We use $6n+10m+2$ items, denoted by $M=\{g_1,g_2,\dots,g_{6n+10m+2}\}$. The items are arranged in sequence based on their indices.
The variable items of $x_i$ now consists of $2(o_i +\overline{o}_i+3)$ consecutive items
$$
V_i=\big\{g_{\sum_{p=1}^{i-1}2(o_p+\overline{o}_p+3)+h}\mid h\in[2(o_i+\overline{o}_i+3)]\big\}.
$$
Within the variable items in $V_i$, we call the first item a \emph{left assignment item}, 
the next $2o_i$ items the \emph{positive occurrence items}, 
the next item again a left assignment item, the next two items the \emph{divider items}, 
the next item a \emph{right assignment item}, 
the next {$2\overline{o}_i$} items the \emph{negative occurrence items}, 
and the next item is again a right assignment item. 
Note that there are $6n+6m$ variable items by $\sum_{p=1}^n2(o_p+\overline{o}_p+3)=6n+6m$.
The next two items $g_{6n+6m+1}$ and $g_{6n+6m+2}$ are divider items.
For each clause $C_j$ with $j\in[m]$, we have four corresponding clause items $\{g_{6n+6m+4j-1},g_{6n+6m+4j},g_{6n+6m+4j+1},g_{6n+6m+4j+2}\}$.

The valuations of agents are defined as follows.
\begin{itemize}
\item For each variable agent $x_i\in N_x$ with $i\in[n]$, we set $v_{x_i}(g)$ to be $1$ if $g$ is a left or right assignment item and $0$ otherwise.

\item For each divider agent $d_i$ with $i\in[n]$, we set $v_{d_i}(g)$ to be $1$ only when $g$ is a divider item in $V_i$ 
and $0$ otherwise.

\item For the last divider agent $d_{n+1}$, we set $v_{d_{n+1}}(g)$ to be $1$ only when $g\in\{g_{6n+6m+1},g_{6n+6m+2}\}$ and $0$ otherwise.

\item For each occurrence agent $c_{j,t}\in N_c$ with $j\in[m]$ and $t\in[3]$, we set $v_{c_{j,t}}(g)$ to be $1$ only for six items $g$. Suppose that $C_j$ is the $h$th clause containing literal $\ell_{j,t}$.
If $\ell_{j,t}$ is a positive variable $x_i$, then $v_{c_{j,t}}(g)=1$ when $g$ is either the $(2h-1)$st or the $2h$th positive occurrence item of $V_i$ or a clause item of $C_j$.
If $\ell_{j,t}$ is a negative variable $\overline{x}_i$, then $v_{c_{j,t}}(g)=1$ when $g$ is either the $(2h-1)$st or the $2h$th negative occurrence item of $V_i$ or a clause item of $C_j$.
\end{itemize}
From this definition, we can see that each agent approves at most six items, and each item is approved by at most three agents. Thus, this instance is $(6,3)$-sparse.

We demonstrate that the optimal egalitarian social welfare of the reduced instance is (at least) $2$ if and only if the 3SAT formula $\varphi$ is satisfiable.
Intuitively, the truth assignment of $\varphi$ corresponds to whether the variable agent $x_i$ receives right assignment items or left assignment items of $V_i$.
For example, if $\varphi=C_1\wedge C_2$ where $C_1=x_1\vee x_2\vee \overline{x}_3$ and $C_2=x_2\vee x_3\vee \overline{x}_4$, the valuations of the agents in the reduced instance is given as in \Cref{tab:flexible-Emax-hard}.

\begin{table}[htbp]
\centering
\caption{Reduced instance of the E-max problem of binary CAP from { $(x_1\vee x_2\vee\overline{x}_3)\wedge(x_2\vee x_3\vee \overline{x}_4)$}. The allocation represented by red color corresponds to a truth assignment of $(x_1,x_2,x_3,x_4)=(\true,\false,\false,\false)$.}\label{tab:flexible-Emax-hard}
\tabcolsep = 2mm
\renewcommand{\arraystretch}{.5}
\scalebox{1}{
\begin{tabular}{c||c|c|c|c|d|c|c}
\toprule
            & $V_1$ & $V_2$    & $V_3$    & $V_4$  &      & $C_1$  & $C_2$   \\\midrule
$x_1$       & $1001\divider{00}\alert{11}$& $000000\divider{00}00$ & $0000\divider{00}0000$ & $00\divider{00}0000$ & \divider{$00$} & $0000$ & $0000$  \\
$x_2$       & $0000\divider{00}00$& $\alert{100001}\divider{00}11$ & $0000\divider{00}0000$ & $00\divider{00}0000$ & \divider{$00$} & $0000$ & $0000$  \\
$x_3$       & $0000\divider{00}00$& $000000\divider{00}00$ & $\alert{1001}\divider{00}1001$ & $00\divider{00}0000$ & \divider{$00$} & $0000$ & $0000$  \\
$x_4$       & $0000\divider{00}00$& $000000\divider{00}00$ & $0000\divider{00}0000$ & $\alert{11}\divider{00}1001$ & \divider{$00$} & $0000$ & $0000$  \\\hline
$d_1$       & $0000\alert{\divider{11}}00$& $000000\divider{00}00$ & $0000\divider{00}0000$ & $00\divider{00}0000$ & \divider{$00$} & $0000$ & $0000$  \\
$d_2$       & $0000\divider{00}00$& $000000\alert{\divider{11}}00$ & $0000\divider{00}0000$ & $00\divider{00}0000$ & \divider{$00$} & $0000$ & $0000$  \\
$d_3$       & $0000\divider{00}00$& $000000\divider{00}00$ & $0000\alert{\divider{11}}0000$ & $00\divider{00}0000$ & \divider{$00$} & $0000$ & $0000$  \\
$d_4$       & $0000\divider{00}00$& $000000\divider{00}00$ & $0000\divider{00}0000$ & $00\alert{\divider{11}}0000$ & \divider{$00$} & $0000$ & $0000$  \\
$d_5$       & $0000\divider{00}00$& $000000\divider{00}00$ & $0000\divider{00}0000$ & $00\divider{00}0000$ & \alert{\divider{11}} & $0000$ & $0000$  \\\hline
$c_{1,1}$ & $0\alert{11}0\divider{00}00$& $000000\divider{00}00$ & $0000\divider{00}0000$ & $00\divider{00}0000$ & \divider{$00$} & $1111$ & $0000$  \\
$c_{1,2}$ & $0000\divider{00}00$& $011000\divider{00}00$ & $0000\divider{00}0000$ & $00\divider{00}0000$ & \divider{$00$} & $\alert{11}11$ & $0000$  \\
$c_{1,3}$ & $0000\divider{00}00$& $000000\divider{00}00$ & $0000\divider{00}0000$ & $00\divider{00}0000$ & \divider{$00$} & $11\alert{11}$ & $0000$  \\
$c_{2,1}$ & $0000\divider{00}00$& $000110\divider{00}00$ & $0000\divider{00}0000$ & $00\divider{00}0000$ & \divider{$00$} & $0000$ & $\alert{11}11$  \\
$c_{2,2}$ & $0000\divider{00}00$& $000000\divider{00}00$ & $0110\divider{00}0000$ & $00\divider{00}0000$ & \divider{$00$} & $0000$ & $11\alert{11}$  \\
$c_{2,3}$ & $0000\divider{00}00$& $000000\divider{00}00$ & $0000\divider{00}0000$ & $00\divider{00}0\alert{11}0$ & \divider{$00$} & $0000$ & $1111$  \\
\bottomrule
\end{tabular}
}
\end{table}

Suppose that $\varphi$ is satisfiable.
By fixing a truth assignment that satisfies $\varphi$, we construct an allocation that achieves an egalitarian social welfare of $2$ as follows.
\begin{itemize}
\item For each $i\in[n]$, $x_i\in N_x$ receives the two right and two left assignment items (including the items in between) if the variable $x_i$ is assigned $\true$ and $\false$ in the truth assignment, respectively.
\item For each $i\in[n]$, the divider agent $d_i$ receives two divider items in $V_i$.
The divider agent $d_{n+1}$ receives two divider items of $g_{6n+6m+1}$ and $g_{6n+6m+2}$.
\item For each $j\in[m]$, at least one literal in $\{\ell_{j,1},\ell_{j,2},\ell_{j,3}\}$ is $\true$ in the truth assignment. For an agent $c_{j,t}$ who corresponds to such literal, we allocate two variable items that $c_{j,t}$ desires. For the other two agents out of $\{c_{j,1},c_{j,2},c_{j,3}\}$, we allocate two clause items of $C_j$ to each.
\end{itemize}
Then, it is not difficult to see that this allocation is contiguous, and every agent gets the value of $2$.

Conversely, suppose that there exists a contiguous allocation $\bA^*$ such that $v_i(A^*_i)\ge 2$ for every $i\in N$.
Note that each divider agent must receive corresponding divider items.
Thus, each variable agent $x_i$ receives two right assignment items or two left assignment items of $V_i$.
Then, we construct a truth assignment as follows:
\begin{quote}
if $x_i\in N_x$ receives two right (resp.\ left) assignment items of $V_i$, then we assign $\true$ (resp.\ $\false$) to $x_i$.
\end{quote}

Suppose to the contrary, this assignment does not satisfy $\varphi$.
Then, there exists a clause $C_j$ that is not satisfied.
In this case, the clause agents $c_{j,1},c_{j,2},c_{j,3}$ cannot take desired variable items.
However, the remaining items that they desire are only four clause items of $C_j$.
Hence, the valuation for $\bA^*$ of at least one of these agents is at most $1$, which contradicts the assumption that $v_i(A^*_i)\ge 2$ for every $i\in N$.

Therefore, the following theorem holds.
\begin{theorem}\label{thm:Emax-hard}
For the flexible-order setting, the E-max problem of binary $(6,3)$-sparse CAP is NP-hard.
Specifically, it is NP-complete to determine whether the optimal egalitarian social welfare is at least $2$ or at most $1$.
\end{theorem}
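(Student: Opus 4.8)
The plan is to give a polynomial-time reduction from 3SAT, which is NP-complete. Given a formula $\varphi=C_1\wedge\dots\wedge C_m$ over variables $x_1,\dots,x_n$, I would build the binary CAP instance described above: the agents are the variable agents $N_x$, the divider agents $N_d$, and the occurrence agents $N_c$, and the item path consists of the variable blocks $V_1,\dots,V_n$ (each laid out, in order, as a left assignment item, the positive occurrence items, a left assignment item, the divider items, a right assignment item, the negative occurrence items, and a right assignment item), followed by two more divider items and four clause items per clause. The design principle is: a divider agent values only its own adjacent pair of divider items, so to reach value $2$ it must take exactly that pair, which pins the middle of each variable block; consequently a variable agent $x_i$, which values only the two left and two right assignment items of $V_i$, can reach value $2$ only by taking both left assignment items together with the positive occurrence items between them, or both right assignment items together with the negative occurrence items between them, and this binary choice encodes the truth value of $x_i$. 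An occurrence agent $c_{j,t}$, corresponding to an occurrence of a literal $\ell_{j,t}$, values its two designated occurrence items inside the relevant variable block together with the four clause items of $C_j$; it can be satisfied ``for free'' from those occurrence items exactly when $\ell_{j,t}$ is true under the encoded assignment, and must otherwise draw on clause items. I would first verify, by a short case analysis over the three agent types, that each agent approves at most six items and each item is approved by at most three agents, so the instance is $(6,3)$-sparse, and that the construction runs in polynomial time.

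For the forward direction, assuming $\varphi$ has a satisfying assignment I would construct a contiguous allocation of egalitarian welfare $2$ explicitly: give each $x_i$ its two left assignment items and everything between them if $x_i$ is false, and its two right assignment items and everything between them if $x_i$ is true; give each divider agent its adjacent pair of divider items; and, for each clause $C_j$, pick a true literal $\ell_{j,t}$, let $c_{j,t}$ take its two desired occurrence items (which the corresponding variable agent has left free), and split the four clause items of $C_j$ evenly between the other two occurrence agents of $C_j$. Then I would check that all bundles are contiguous — the assignment items not taken by a variable agent are worth $0$ to everyone else and are simply absorbed into adjacent bundles — and that every agent receives value exactly $2$.

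For the converse, suppose a contiguous allocation $\bA^*$ satisfies $v_i(A^*_i)\ge 2$ for all $i$. The main obstacle, and the heart of the argument, is the structural claim that $\bA^*$ respects the gadget skeleton: each divider agent is forced to receive precisely its pair of (adjacent) divider items — its only valued items — so by contiguity no variable agent can own both a left and a right assignment item of its block, hence each $x_i$ takes two assignment items from a single side of $V_i$; reading off these sides defines a truth assignment. It remains to show that this assignment satisfies $\varphi$: if some clause $C_j$ were unsatisfied, then for each $t$ the side of the relevant variable block matching $\ell_{j,t}$'s truth value would have been swallowed by its variable agent, so none of $c_{j,1},c_{j,2},c_{j,3}$ could get its occurrence items, leaving only the four clause items of $C_j$ to furnish three agents with value $2$ apiece — impossible — a contradiction. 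Finally, a candidate allocation can be checked for contiguity and egalitarian value in polynomial time, so the problem lies in NP; since the optimum is always an integer and the reduction sends satisfiable instances to optimum $\ge 2$ and unsatisfiable instances to optimum $\le 1$, we obtain NP-completeness of deciding whether the optimal egalitarian social welfare is at least $2$, i.e., of the gap version between $2$ and $1$.
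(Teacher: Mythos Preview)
Your proposal is correct and follows essentially the same reduction as the paper: the same agent types, the same layout of each $V_i$ (left assignment item, positive occurrence items, left assignment item, divider pair, right assignment item, negative occurrence items, right assignment item), the same forward construction from a satisfying assignment, and the same converse via the divider-pin and the pigeonhole on the four clause items of an unsatisfied clause. The only addition is your explicit sentence about NP membership, which the paper leaves implicit.
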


It should be noted that whether the optimal egalitarian social welfare is at least $1$ can be efficiently checked by considering the existence of agent-side perfect matching in a bipartite graph $G=(N,M;E)$, where $E=\{(i,g)\in N\times M\mid v_i(g)=1\}$.


\section{Concluding remarks}
This study investigated the computational complexity of finding a fair or efficient contiguous allocation when items are arranged on a line, and each agent has a binary valuation for each item.
We considered two settings of fixed-order and flexible-order.
For the fixed-order setting, we provided polynomial-time algorithms for the problems of U-max, E-max, MMS, PROP, and EQ. Additionally, we proved the NP-hardness of checking the existence of an EF1 allocation.
For the flexible-order setting, we demonstrated that the problems of U-max and E-max are NP-hard.
Moreover, for the U-max problem, we provided a $1/2$-approximation algorithm for a special case and two FPT algorithms.

Finally, we discuss some possible future directions.
A straightforward future work is to construct a faster parameterized algorithm for maximizing utilitarian social welfare. 
There is also the scope of finding better approximation algorithms or better inapproximability of the U-max and E-max problems.
Specifically, it is open whether the $1/8$-approximation algorithm of Aumann et al.~\cite{aumann2012computing} for the U-max problem with additive valuations can be improved in the special case of binary additive.
Igarashi~\cite{igarashi2023cut} proved that an EF1 allocation always exists and posed a question regarding the computational complexity of finding it for the flexible-order setting. 
While we show that finding such an allocation is NP-hard for the fixed order case, the complexity of the problem for the general flexible order setting remains open.

\section*{Acknowledgement}
This work was partially supported by 
JST ERATO Grant Number JPMJER2301, 
JST PRESTO Grant Number JPMJPR2122, 
JSPS KAKENHI Grant Number JP20K19739,
Value Exchange Engineering, a joint research project between Mercari, Inc.\ and the RIISE, Sakura Science Exchange Program, and a MATRICS grant from SERB.

\bibliographystyle{abbrvnat} 
\bibliography{references}

\end{document}